\documentclass{article}

\usepackage{hyperref}
\usepackage{amsthm}
\usepackage{mathtools}
\mathtoolsset{centercolon}
\usepackage[noline,noend,linesnumbered]{algorithm2e}

\newtheorem{lemma}{Lemma}[section]
\newtheorem{proposition}{Proposition}[section]
\newtheorem{theorem}{Theorem}[section]
\newtheorem{example}{Example}[section]
\newtheorem{definition}{Definition}[section]

\usepackage{latexsym}



\mathcode`.="613A          
\mathcode`|="326A          


\newcommand{\defeq}{\stackrel{.}{=}}

\let\implies=\Rightarrow

\let\pleq=\sqleq          

\def\land{\mathrel{\wedge}}

\def\preleq{\mathrel{
  \raise 2pt\hbox{$\mathop\sqsubset\limits_{\hbox{$\sim$}}$}%
}}

\let\union=\cup

\let\join=\sqcup

\newcommand\setsize[1]{\left| #1 \right|}

\newcommand\domainsub{\mathop{\lhd\mkern-14mu-}}


\newcommand\auxfun[1]{\expandafter\newcommand\csname #1\endcsname{%
 \mathop{\hbox{$\mathsf{#1}$}}\nolimits}}

\newcommand\af[1]{\mathop{\hbox{$\mathsf{#1}$}}\nolimits}
\newcommand\kw[1]{\mathrel{\hbox{$\mathbf{#1}$}}}


\auxfun{Max}
\auxfun{fix}
\auxfun{dom}
\auxfun{fst}
\auxfun{snd}
\auxfun{ran}


\auxfun{incr}
\auxfun{fetch}
\auxfun{init}
\auxfun{merge}
\auxfun{receive}
\auxfun{random}
\auxfun{send}

\auxfun{id}
\auxfun{tier}
\auxfun{val}
\auxfun{below}
\auxfun{sck}
\auxfun{dck}
\auxfun{slots}
\auxfun{tokens}
\auxfun{vals}
\auxfun{fn}
\auxfun{discardtokens}
\auxfun{discardslot}
\auxfun{fillslots}
\auxfun{mergevectors}
\auxfun{aggregate}
\auxfun{createslot}
\auxfun{createtoken}
\auxfun{cachetokens}

\newcommand{\mergewith}[3]{\union^{#1}(#2,#3)}

\title{Scalable Eventually Consistent Counters over Unreliable Networks}
\author{Paulo S\'ergio Almeida, Carlos Baquero
\\\{psa,cbm\}@di.uminho.pt
\\HASLab, INESC Tec \& Universidade do Minho
}

\begin{document}

\maketitle

\begin{abstract}

Counters are an important abstraction in distributed computing, and play a
central role in large scale geo-replicated systems, counting events such as
web page impressions or social network ``likes''.
Classic distributed counters, strongly consistent, cannot be made both
available and partition-tolerant, due to the CAP Theorem, being unsuitable to
large scale scenarios.
This paper defines Eventually Consistent Distributed Counters (ECDC) and
presents an implementation of the concept, Handoff Counters, that is scalable
and works over unreliable networks. By giving up the sequencer aspect of
classic distributed counters, ECDC implementations can be made AP in the CAP
design space, while retaining the essence of counting.
Handoff Counters are the first CRDT (Conflict-free Replicated Data Type) based
mechanism that overcomes the identity explosion problem in naive CRDTs,
such as G-Counters (where state size is linear in the number of independent
actors that ever incremented the counter), by managing identities towards
avoiding global propagation and garbage collecting temporary entries.
The approach used in Handoff Counters is not restricted to counters, being
more generally applicable to other data types with associative and commutative
operations.

\end{abstract}

\section{Introduction}

A counter is one of the most basic and important abstractions in computing.
From the small-scale use of counter variables in building data-types, to
large-scale distributed uses for counting events such as web page impressions,
banner clicks or social network ``likes''.
Even in a centralized setting, the increment operation on a counter is
problematic under concurrency, being one of the examples most used to
illustrate the problems that arise if a load, add one and store are
not atomic.
In a distributed setting things are much worse, due to the absence of shared
memory, possibly unreliable communication (message loss, reordering or
duplication), network partitions or node failures.

If one has a strongly consistent distributed database with support for
distributed transactions counters can be trivially obtained.
Unfortunately, such databases are not appropriate for large-scale
environments with wide-area replication, high latency and possible network
partitions. A naive counter obtained through a ``get, add one, and put''
transaction will not scale performance-wise to a wide-area deployment with
many thousands of clients.

The CAP theorem~\cite{Brewer2000, GilbertLynch2002} says that one cannot have
Consistency, Availability, and Partition-tolerance together; one must choose
at most two of these three properties. Therefore, to have an always-available
service under the possibility of partitions (that in world-wide scenarios are
bound to happen from time to time), distributed data stores such as
Dynamo~\cite{Dynamo}, Cassandra~\cite{Cassandra} and Riak~\cite{Klophaus2010}
have been increasingly choosing to go with AP (availability and
partition-tolerance) and give up strong consistency and general distributed
transactions, in what has become known as the NoSQL movement.

With no support for strong consistency and distributed transactions, and an
API mostly based on simple get and put operations, obtaining a ``simple''
counter becomes a problem. NoSQL data stores like Cassandra have been trying
to offer counters natively with an increment operation in the API, something
that has revealed a non-trivial problem, involving many ad hoc evolutions.
The current state in Cassandra is well summarized by~\cite{CASSANDRA4775}:
``The existing partitioned counters remain a source of frustration for most
users almost two years after being introduced.  The remaining problems are
inherent in the design, not something that can be fixed given enough
time/eyeballs.''

An approach towards obtaining provably correct eventually consistent
implementations of data types such as counters are the so called CRDTs:
Conflict-free Replicated Data Types~\cite{Shapiro2011CRDT}. The idea is that
each node keeps a CRDT (a replica) that can be locally queried or operated upon,
giving availability even under partitions, but providing only eventual
consistency: queries can return stale values, but if ``enough'' messages go
through, then all nodes will converge to the correct value. CRDT-based data
types (the state-based ones) can be built to trivially tolerate unreliable
communication.  They are designed so that duplicates or non-FIFO communication
are not a problem.

CRDTs are not, however, the silver bullet they are sometimes assumed to be. A
problem that can easily arise in CRDTs is scalability. This problem is easy to
explain: the CRDT approach envisions a CRDT per participating entity; in many
CRDTs each entity needs to have a unique identity, and many CRDTs are made to
work by keeping maps from participating entities ids to some value; the set of
keys in these maps will keep growing along time, as more entities participate,
preventing scalability.

This means that practical CRDT-based approaches to counters involving some
server nodes and possibly many clients will exclude clients from the entities
having CRDTs, having them server-side only. This will solve the scalability
problem and allow unreliable communication between servers, but will not solve
the fault-tolerance problem in the client-server interaction. 
This is because a basic problem with counters is that the increment operation
is not idempotent; therefore, an increment request by a client (which itself
does not keep a CRDT) cannot just be re-sent to the server in case there is
no acknowledgment. This problem is well recognized by practitioners, as can
be seen by, e.g.,~\cite{CASSANDRA2495}.

If one looks at theory of distributed counting, a substantial
amount of work has been done, namely
Software Combining Trees~\cite{YewTzengLawrie1987,GoodmanVernonWoest1989},
Counting Networks~\cite{AspnesHerlihyShavit1994}, 
Diffracting Trees~\cite{ShavitZemach1996} and
Counting Pyramid~\cite{WattenhoferWidmayer2004} (and many variants of these,
specially of counting networks).
However, all these works address a strongly consistent definition of
counter, as a data type that provides a single ``fetch-and-increment''
operation in the sense of~\cite{Stone1984}. Although an unquestionably
powerful abstraction (e.g., to generate globally unique sequence numbers), it
is indeed too powerful to be implemented while providing availability under
unreliable communication with possible partitions. The focus of these works
is scalability (mainly avoiding contention or bottlenecks) and not
fault-tolerance. While some aspects like wait-freedom are addressed, to
tolerate failures of processes, tolerance to message loss or component failure
(e.g., a balancer in the case of counting networks) is not addressed. This
means that the applicability of these mechanisms is mostly in tightly-coupled,
low-latency, failure-free environments, such as multiprocessors, serving as
scalable alternatives to lock-based counter implementations, as discussed
in~\cite{HerlihyLimShavit1995}.

The remainder of this paper is organized as follows:
Section~\ref{classic_counters} briefly revisits classic strongly consistent
distributed counters, and explains why they are not suitable for large-scale
AP scenarios. In Section~\ref{ec_counters} we propose a definition of
\emph{Eventually Consistent Distributed Counters} (ECDC), stating both safety
and liveness conditions.
Section~\ref{naive_crdt_counters} describes a naive eventually consistent
CRDT-based counter, that is AP in unreliable networks, and explains its
scalability problems; it also discusses in more detail the problems that arise
if these CRDTs are restricted to server-side.  In
Section~\ref{handoff_counters} we present \emph{Handoff Counters}, our new
CRDT-based mechanism that implements eventually consistent counters that is
simultaneously reliable, available and partition-tolerant under unreliable
networks, and scalable in the number of entities (both active and already
terminated).  Section~\ref{correctness} contains formal correctness proofs for
the mechanism.  In Section~\ref{practical_considerations} we address some
practical implementation issues. In Section~\ref{beyond-counters} we discuss
how the handoff mechanism proposed can be applied to more general scenarios,
beyond simple counters, to commutative monoids having an associative and
commutative operation, and conclude in Section~\ref{conclusion}.

\section{Classic Distributed Counters}
\label{classic_counters}

In most papers about distributed counters, e.g.,~\cite{AspnesHerlihyShavit1994,
ShavitZemach1996, WattenhoferWidmayer2004}, a counter is an abstract data type
that provides a fetch-and-increment operation (increment for short), which
returns the counter value and increments it. The basic correctness criteria is
usually: a counter in a quiescent state (when no operation is in progress)
after $n$ increments have been issued will have returned all values from $0$
to $n-1$ with no value missing or returned twice. I.e., when reaching a
quiescent state, all operations so far must have behaved as if they have
occurred in some sequential order, what is known as \emph{quiescent
consistency}~\cite{AspnesHerlihyShavit1994}. Some counter mechanisms,
e.g.,~\cite{WattenhoferWidmayer2004}, 
enforce the stronger \emph{linearizability}~\cite{HerlihyWing1990} condition,
which ensures that, whenever a first increment returns before a second one is
issued, the first returns a lower value than the second.

Even forgetting the stronger variants that enforce linearizability, classic
distributed counters providing quiescent consistency are too strongly
consistent if one is aiming for availability and partition tolerance.
For classic distributed counters we have the following result for
deterministic algorithms, with a trivial proof, which is nothing more than an
instantiation of the CAP theorem:

\begin{proposition}
A quiescently consistent fetch-and-increment counter cannot
be both available and partition tolerant.
\end{proposition}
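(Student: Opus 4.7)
The plan is to run a standard CAP-style indistinguishability argument, specialized to the quiescent-consistency specification of fetch-and-increment.

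First, I would set up the adversarial scenario. Take any deterministic implementation and consider two nodes $p_1$ and $p_2$ together with a network adversary that partitions them from each other (and from any other nodes, which we can ignore). By availability, any fetch-and-increment issued at either node must eventually return a value without waiting for communication across the partition. So I would issue exactly one increment at $p_1$ and one at $p_2$, both during the partition, obtaining return values $v_1$ and $v_2$. After both operations return, the partition heals and messages are delivered, leading to a quiescent state in which precisely two operations have occurred. Quiescent consistency then forces $\{v_1, v_2\} = \{0, 1\}$.

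Next I would derive the contradiction via an indistinguishability argument. Consider the alternative execution $E_1$ in which only $p_1$ ever receives an increment and $p_2$ is simply silent forever; quiescent consistency of $E_1$ forces $p_1$'s operation to return $0$. Symmetrically, in execution $E_2$ where only $p_2$ is ever incremented, $p_2$'s operation returns $0$. In the actual partitioned execution, $p_1$ receives exactly the same local inputs (request plus absence of any message from $p_2$) as in $E_1$ up to the moment it must respond, so by determinism it also returns $0$; symmetrically $p_2$ returns $0$. Thus $v_1 = v_2 = 0$, contradicting $\{v_1, v_2\} = \{0,1\}$.

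The only real subtlety, and what I expect to need the most care in writing, is the indistinguishability step: I must argue that ``partition tolerance'' allows the adversary to delay cross-partition messages arbitrarily long, so that at the moment $p_1$ is obliged (by availability) to return, its local history is literally identical to its history in $E_1$. Once that is stated cleanly, determinism closes the argument and the rest is essentially just applying the quiescent-consistency definition given earlier in the section. Nothing here goes beyond the original CAP proof of Gilbert and Lynch; the proposition really is just an instantiation, as the text already flags.
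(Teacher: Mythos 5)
Your proof is correct, and at heart it is the same CAP-style indistinguishability argument as the paper's, but the decomposition is genuinely different. The paper sequentializes the two increments inside the partitioned run: the increment at $u$ returns first, and quiescent consistency applied at that quiescent point (still during the partition) forces it to return $0$; the later increment at $v$ is then forced, again by quiescence during the partition, to return $1$; a single indistinguishability step --- comparing $v$'s view with a run in which only $v$ ever increments --- yields the contradiction, since in that run $v$ would have to return $0$. You instead keep the two increments symmetric and concurrent, invoke quiescent consistency only once, at the quiescent state reached after the partition heals, to get $\{v_1,v_2\}=\{0,1\}$, and then need two indistinguishability steps (one per node, each against a solo run) to force $v_1=v_2=0$. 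The trade-off: the paper's route needs only one alternate run and one indistinguishability argument, but it leans on applying quiescent consistency at quiescent states reached while the partition is still in force; your route invokes quiescence only after healing (the more conservative reading) at the cost of a second alternate run. One presentational point: your $E_1$ with ``$p_2$ silent forever'' is best phrased as the same partitioned run in which no increment is ever issued at $p_2$ --- otherwise a spontaneously gossiping $p_2$ could break indistinguishability before $p_1$ responds; your closing remark about the adversary delaying cross-partition messages shows you see this, and the paper is equally informal with its ``run in which $u$ does not exist.''
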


\begin{proof}
Suppose a network with two nodes $u$ and $v$, and a run $A$ where they are
partitioned. Assume an increment is issued at node $u$ at
time $t_1$ and no other operations are in progress. As the counter is
available and partition tolerant, it will eventually return at some later time
$t_2$. Because the system is in a quiescent state after $t_2$, this increment
must have returned $0$. Suppose an increment is then issued at node $v$ at
some later time $t_3 > t_2$. For the same reasons, this increment will
eventually return, the system becomes quiescent again, and the returned value
must, therefore, be $1$.  But as no messages got through between $u$ and $v$,
this run is indistinguishable by $v$ from a run $B$ in which $u$ does not exist
and only a single increment is issued by $v$. In run $B$, $v$ will,
therefore, behave the same as in run $A$ and return the same value $1$, which
contradicts the requirement for run $B$ that a single increment in the whole
run should have returned $0$ after reaching quiescence.
\end{proof}

This simply formalizes the intuition that it is not possible to generate
globally unique numbers forming a sequence in a distributed fashion without
communication between the nodes involved.

\section{Eventually Consistent Distributed Counters}
\label{ec_counters}

Given the too strongly consistent nature of classic distributed counters, to
achieve availability and partition tolerance a weaker variant of distributed
counters needs to be devised. In this section we define such a variant, that
we call \emph{eventually consistent distributed counters} (ECDC).

Classic counters offer a two-in-one combination of two different features: 1)
keeping track of how many increments have been issued; 2) returning globally
unique values. While undoubtedly powerful, this second feature is the
problematic one if aiming for AP.

For many practical uses of counters (in fact what is being offered in NoSQL
data stores like Cassandra and Riak) one can get away with not having the
second feature, and having a counter as a data type that can be used to count
events, by way of an \emph{increment} operation, which does not return
anything, and an independent \emph{fetch} operation, which returns the value of
the counter. This splitting makes clear that one is not aiming for obtaining
globally unique values (fetch can return the same value several times), and
one can start talking about possibly returning stale values.
Having two independent operations, one to mutate and the other to report, as
opposed to a single atomic fetch-and-increment, corresponds also to the more
mundane conception of a counter, and to what is required in a vast number of
large scale practical uses, where many participants increment a counter while
others (typically less, usually different) ask for reports.

It will be possible to obtain available and partition tolerant eventually
consistent counters, by allowing fetch to return something other than the more
up-to-date value.
Nevertheless, we need concrete correctness criteria for ECDC. We have devised
three conditions. Informally:

\begin{itemize}
\item A fetch cannot return a value greater than the number of
increments issued so far.
\item At a given node, a fetch should return at least the sum of the value
returned by the previous fetch (or 0 if no such fetch was issued)
plus the number of increments issued by this node between these two fetches.
\item All increments issued up to a given time will be reported eventually at
a later time (when the network has allowed enough messages to go through).
\end{itemize}

The first two criteria can be thought of as safety conditions. The first, not
over-counting, is the more obvious one (and also occurs in classic distributed
counters, as implied by their definition).  The second is a local condition on
session guarantees \cite{Terry:1994:SGW:645792.668302}, analogous to having
\emph{read-your-writes} and \emph{monotonic-read}, common criteria in eventual
consistency~\cite{Vogels2009}.  The third is a liveness condition, which states
that eventually, if network communication allows, all updates are propagated and
end up being reported. It implies namely that if increments stop being issued,
eventually fetch will report the correct counter value, i.e., the number of
increments. We will now clarify the system model, and subsequently formalize
the above correctness criteria for ECDC.

\subsection{System Model}
\label{system-model}

Consider a distributed system with nodes containing local memory, with no
shared memory between them.
Any node can send messages to any other node. The network is asynchronous,
there being no global clock, no bound on the time it takes for a message to
arrive, nor bounds on relative processing speeds. The network is unreliable:
messages can be lost, duplicated or reordered (but are not corrupted). Some
messages will, however, eventually get through: if a node sends infinitely
many messages to another node, infinitely many of these will be delivered. In
particular, this means that there can be arbitrarily long partitions, but
these will eventually heal.

Nodes have access to stable storage; nodes can crash but eventually will
recover with the content of the stable storage as at the time of the crash.
Each node has access to a globally unique identifier.

As we never require that data type operations block waiting for other
operations or for message reception, they are modeled as single atomic
actions.
(In I/O Automata~\cite{LynchTuttle1987} parlance, we will use a single action
as opposed to a pair $\af{opStart}$ (input action), and $\af{opEnd}(r)$
(output action) ``returning'' $r$).
This allows us to use $\af{op}_i^t$ to mean that operation op was performed by
node $i$ at time $t$, and in the case of fetch also for the result of that
operation. The actions we use are $\fetch_i$ and $\incr_i$ for the data type
operations, and $\send_{i,j}(m)$ and $\receive_{i,j}(m)$ for message exchange.

\subsection{Formal Correctness Criteria for ECDC}

An eventually consistent distributed counter is a distributed abstract data type
where each node can perform operations $\fetch$ (returning an integer) and
$\incr$ (short for increment), such that the following conditions hold (where
$\setsize{\ }$ denotes set cardinality and $\_$ the unbound variable, matching
any node identifier; also, for presentation purposes, we assume an implicit
$\fetch_\_^0 = 0$ at time 0 by all nodes). For any node $i$, and times $t, t_1,
t_2$, with $t_1 < t_2$:

\paragraph{Fetch bounded by increments:}

\[\fetch_i^t \leq \setsize{\{\incr_\_^{t'} | t' < t\}},\]

\paragraph{Local monotonicity:}


\[\fetch_i^{t_2} - \fetch_i^{t_1} \geq
\setsize{\{\incr_i^{t'} | t_1 < t' < t_2\}},\]

\paragraph{Eventual accounting:}

\[\exists t' \geq t . \forall j. \fetch_j^{t'} \geq
\setsize{\{\incr_\_^{t''} | t'' < t\}}.\]

These criteria, specific to counters, can be transposed to more general
consistency criteria, namely they imply the analogous for counters of:

\begin{description}
\item[Eventual Consistency] From \cite{Vogels2009} \emph{``[It] guarantees
that if no new updates are made to the object, eventually all accesses will
return the last updated value.''}. Eventual accounting is stronger than
eventual consistency: it does not require increments to stop, but clearly
leads to eventual consistency if increments do stop. All CRDTs include this
consistency criteria
\cite{Shapiro2011CRDT}. 

\item[Read-your-writes] From \cite{Terry:1994:SGW:645792.668302} \emph{``[It]
ensures that the effects of any Writes made within a session are visible to
Reads within that session.''}. The analogous of this property, substituting
increments for writes, is implied by local monotonicity: in a session where a
process issues increments to a given node, at least the effect of those
increments is seen by further fetches by that process.

\item[Monotonic-reads] Defined in \cite{Terry:1994:SGW:645792.668302} and using
the formulation in \cite{Vogels2009} \emph{``If a process has seen a particular
value for the object, any subsequent accesses will never return any previous
values.''}. This property is also obtained by local monotonicity.
\end{description}

\section{Naive CRDT-based Counters}
\label{naive_crdt_counters}

A state-based CRDT amounts to a replica that can be locally queried or
updated; it directly provides availability and partition tolerance, as all
client-visible data type operations are performed locally, with no need for
communication.  Information is propagated asynchronously, by sending the local
state to other nodes (e.g., using some form of gossip~\cite{Gossip1987}); the
receiving node performs a \emph{merge} between the local and the received
CRDT. CRDTs are designed so that: their abstract state forms a join
semilattice (a partially ordered set for which there is a defined least upper
bound for any two elements, see, e.g.,~\cite{davey2002}); the merge operation
amounts to performing a mathematical \emph{join} of the correspondent abstract
states, deriving their least upper bound; every data type operation is an
\emph{inflation} that moves the state to a larger value (i.e., $\forall x.
f(x) \geq x$).  This means that merges can be performed using arbitrary
communication patterns: join is associative, commutative and idempotent;
duplicates are, therefore, not a problem and in doubt of message loss, a
message can be resent and possibly re-merged; old messages received
out-of-order are also not a problem.  CRDTs solve, therefore, the problem of
unreliable communication for data types that conform to their design.

The CRDT concept can be used to trivially obtain an ECDC. The local state will
amount to a version-vector~\cite{ParkerPopek1983}: a map of node ids to
non-negative integers. When a node wants to perform an increment, it
increments the entry corresponding to its unique identifier (or adds an entry
mapped to one if the id is not mapped). The fetch is obtained by adding all
integers in the map.  The merge operation also corresponds to reconciliation
of version-vectors: maps are merged by performing a pointwise maximum (each
key becomes mapped to the maximum of the corresponding values, assuming absent
keys are implicitly mapped to 0).

It is easy to see that these version-vector based counters respect the
criteria for ECDC. The fetch results from adding values that result from
increments performed, being bounded by the number of increments; local
increments are immediately accounted; as increments by different nodes are
performed in disjoint entries, if each CRDT is propagated and merged to every
other one, all will converge to a CRDT that stores the exact number of
increments performed by each node in the corresponding map entry; therefore,
all increments are eventually accounted.

\subsection{The Scalability Problem of Client-side CRDTs}

Counters implemented as version-vectors, although meeting all criteria for
ECDC over unreliable networks, suffer from serious scalability
problems. Consider a network in which many nodes (clients) perform
operations, while others (servers) allow information propagation and keep
durable storage after client nodes have ceased from participating.

The pure CRDT approach assumes that all participating entities have a CRDT. In
this case, each participating node (both clients and servers) will introduce
its id in the map. Over time the map will grow to unreasonable sizes, making
both the storage space and communication costs (of transmitting the map to be
merged on the other node) unbearable. The worst aspect is that the size does
not depend only on the currently participating clients: it keeps growing,
accumulating all ids from past clients that have already stopped
participating. This means that naive client based CRDTs are not scalable and
not usable in some relevant practical cases.

\subsection{The Availability Problem of Server-side CRDTs}

Due to the above scalability problem, current version-vector based counters
(e.g., in Cassandra or Riak) do not use the pure CRDT approach, but use CRDTs
server-side only. This means that only a relatively small number of nodes (the
servers) hold CRDTs, while clients use remote invocations to ask a server to
perform the operation. Server-side CRDTs allow unreliable communication
between servers, including partitions (e.g., between data-centers).
However, the problem of unreliable communication between client and server
remains.

As the increment operation is not idempotent, it cannot be simply reissued if
there is doubt whether it was successful. In practice, this leads to the use
of remote invocations over connection-oriented protocols (e.g.,
TCP~\cite{CerfKahn1974}) that provide reliable communication. This only
partially solves the problem: a sequence of acknowledged increments is known
to have been applied exactly-once, but if the last increment is not
acknowledged and the connection timeouts, this last increment is not known to
have been successfully applied, but cannot be reissued using a new
connection, to the same or a different server, as it could lead to
over-counting.

Attempts to circumvent this reliability problem through a general
data-type-agnostic communication layer bring back scalability and/or
availability problems. If an infinite duration connection incarnation is
maintained for each client, where no operation can fail due to a timeout, this
will imply stable server state to manage each client connection, leading to
state explosion in servers, as clients cannot be forgotten. This because there
is no protocol that gives reliable message transfer within an infinite
incarnation for general unbounded capacity (e.g., wide-area networks) non-FIFO
lossy networks that does not need stable storage between
crashes~\cite{AttiyaDolevWelch1995,FeketeLynchMansourSpinelli1993}.  This
problem can be overcome by never failing due to a timeout, but allowing
connections to be closed if there are no pending requests and the connection
close handshake is performed successfully (e.g., before a partition occurs).
With a three-way handshake an oblivious protocol is
possible~\cite{AttiyaRappoport1994}, with no need to retain connection
specific information between incarnations, but only a single unbounded counter
for the whole server.

With this last approach, the size of stable server state is not a problem in
practice, but the reliability problem is overcome at the cost of availability:
given a partition, a pending request will never fail due to a time-out, but the
client that has issued a request to a given server will be forced to wait
unboundedly for the result from that server, without being able
to give up waiting and continue the operation using a different server.

These problems can be summarized as: the use of a general purpose
communication mechanism to send non-idempotent requests can provide
reliability at the cost of availability. Our data-type specific mechanism
overcomes this problem by allowing client-side CRDTs which are scalable.

\section{Handoff Counters}
\label{handoff_counters}

In this section we present a novel CRDT based counter mechanism, which we call
\emph{Handoff Counters}, that meets the ECDC criteria, works in unreliable
networks and, as opposed to simple version vector CRDT counters, is scalable.
The mechanism allows arbitrary numbers of nodes to participate and adopts the
CRDTs everywhere philosophy, without distinguishing clients and servers,
allowing an operation (fetch or increment) to be issued at any node.

It addresses the scalability issues (namely the id explosion in maps) by:
assigning a tier number (a non negative integer) to each node; promoting an
hierarchical structure, where only a small number of nodes are classified as tier
0; having ``permanent'' version vector entries only in (and for) tier 0 nodes,
therefore, with a small number of entries; having a \emph{handoff} mechanism
which allows a tier $n+1$ ``client'' to handoff values to some tier $n$
``server'' (or to any smaller tier node, in general); making the entries
corresponding to ``client'' ids be garbage-collected when the handoff is
complete. Figure \ref{fig:simpleTiers} illustrates a simple configuration, with
end-client nodes connecting to tier 1 nodes in their regional datacenters.

\begin{example} 
\label{ex:datacenter} 
Even though no formal client/server
distinction is made, a typical deployment scenario would be having, e.g., two
(for redundancy) tier 0 nodes per data-center, devoted to inter-datacenter
communication, a substantial, possibly variable, number of tier 1 server nodes
per datacenter, and a very large number of tier 2 nodes in each datacenter. The
datastore infrastructure would be made up of tier 0 and 1 nodes, while tier 2
nodes would be the end-clients (application server threads handling end-client
connections or code running at the end-clients) connecting to tier 1 nodes in a
datacenter. More tiers can be added in extreme
cases, but this setup will be enough for most purposes: e.g., considering 5
datacenters, 50 tier 1 nodes per datacenter, each serving 1000 concurrent tier 2
clients, will allow 250000 concurrent clients; in this case the ``permanent''
version vectors will have 10 entries.  
\end{example}

\begin{figure}
\centering
\includegraphics[width=0.7\textwidth]{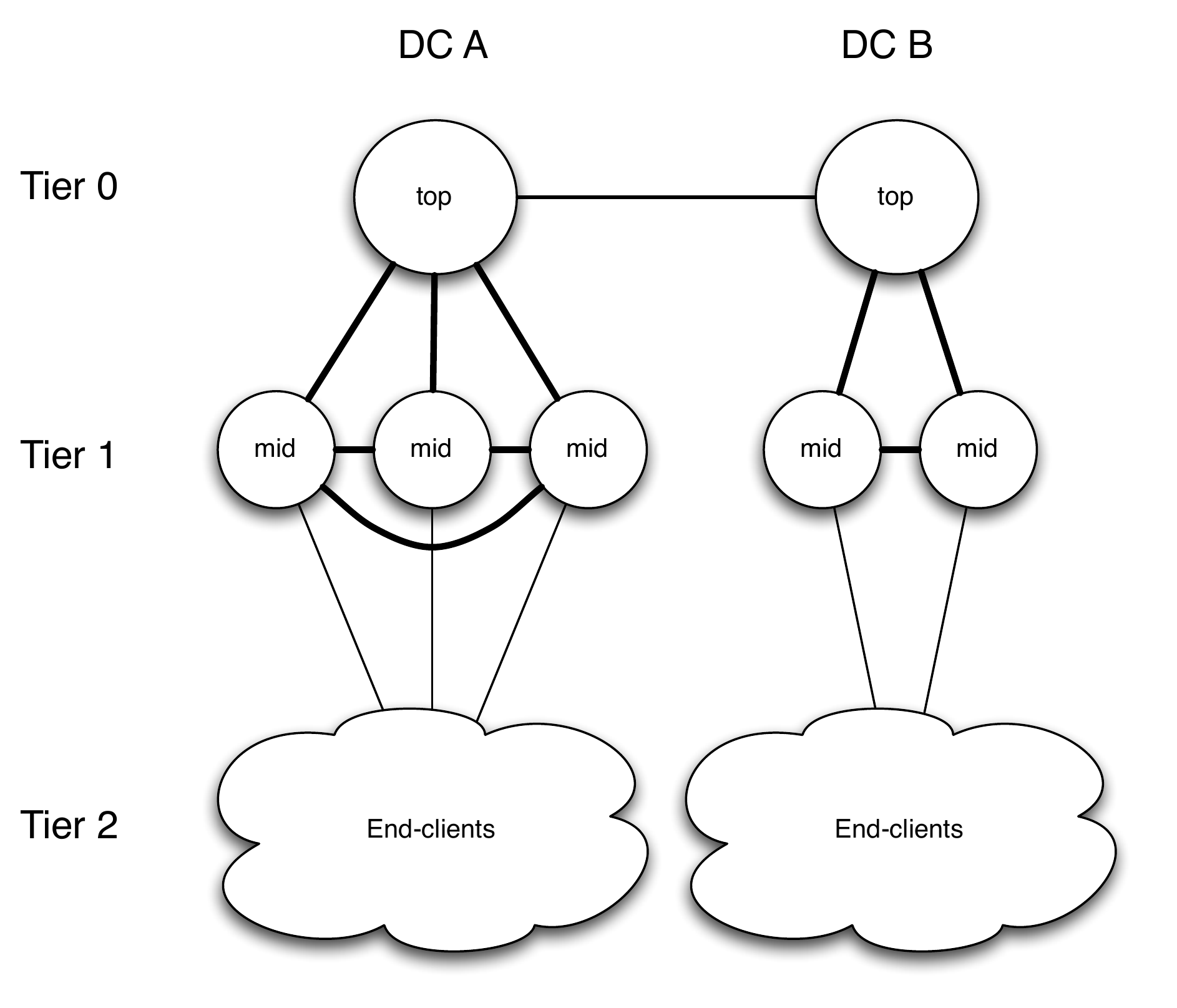}
\caption{A simple configuration with three tiers and two datacenters.}
\label{fig:simpleTiers}
\end{figure}

The mechanism involves three aspects: the handoff, which migrates accounted
values towards smaller tiers, making all increments eventually be accounted
in tier 0 nodes; classic version vector dissemination and merging between tier
0 nodes; top-down aggregation from tier 0 to bottom-tier nodes, to provide a
best-effort monotonic estimate of the counter value.

Most complexity of the mechanism is related to achieving the handoff without
violating correctness under any circumstances over unreliable networks, while
allowing garbage-collection in typical runs. Other design aspects were
considered, in addition to strictly achieving ECDC, namely:

\begin{itemize}
\item an end-client (maximum tier node) is typically transient; it should be
able to know if locally issued increments have already been handed-off, so
that it can stop the interaction and terminate;
\item there should be no notion of session or affinity; a tier $n+1$ node $u$
that started exchanging messages with a tier $n$ node $v$, should be able to
switch to another tier $n$ node $w$ at any time, e.g., if no message arrives
and $u$ suspects that $v$ has crashed or there is a network partition between
$u$ and $v$, but $u$ is an end-client that wants to be sure its locally
accounted increments have been handed-off to some server before terminating.
\end{itemize}

\subsection{Network Topology}
\label{network-topology}

Handoff counters can be used with many different network topologies. The
simplest one is to assume a fully connected graph, where any node can send
messages to any other node. In general, handoff counters can work with less
connectivity. The assumptions that we make about network topology for the
remainder of the paper are:

\begin{itemize}
\item Each link is bidirectional.
\item The network restricted to tier 0 nodes is a connected sub-network.
\item For each node $u$, there is a path from $u$ to a tier 0 node along a
strictly descending chain of tiers.
\item If a tier node $u$ is linked to two smaller tier nodes $v$ and $w$, then
there is also a link between $v$ and $w$.
\end{itemize}

These assumptions allow version vector dissemination in tier 0 nodes, while
also allowing a client $u$ to start by exchanging messages with a server $v$
and later switching to a server $w$ if $v$ becomes unresponsive. These
assumptions are met by Example~\ref{ex:datacenter} and by the
Figure~\ref{fig:simpleTiers} topology, where inter-datacenter communication is
performed by tier 0 nodes, and where communication between tier 1 nodes or
between tier $n+1$ and tier $n$ nodes needs only be attempted within each
datacenter. It should be emphasized that these assumptions reflect only what
communications are attempted, and can be thought of as the rules for forming a
communication overlay; any of these links may be down for some time, even
possibly incurring in temporary network partitions.

\subsection{Distributed Algorithm}

A benefit of adopting the CRDT approach is not needing a complex distributed
algorithm -- the complexity is transferred to the CRDT. To achieve
correctness, basically any form of gossip can be used, where each node keeps
sending its state instance to randomly picked neighbors, and each node upon
receiving an instance merges it with the local one. To address efficiency,
concerns like the choice of neighbors to communicate will need attention; we
address such concerns in Section~\ref{practical_considerations}.

To describe both the mechanism and its correctness proofs we consider
Algorithm~\ref{alg:dist_algo} to be used, with operations defined in
Figures~\ref{HC-operations} and \ref{HC-transformations}.  In this algorithm,
each node $i$ has a local replica $C_i$, which is an instance of the Handoff
Counter CRDT.  A CRDT replica is initialized using the globally unique node id
and the node tier.  The local operations $\fetch_i$ and $\incr_i$ are
delegated to the corresponding CRDT ones. Each node periodically picks a
random neighbor $j$ and sends it the local instance. Upon receiving an
instance $C_j$ through a link $(j,i)$, it is merged with the local one,
through the CRDT merge operation.  The algorithm is quite trivial, all effort
being delegated to the Handoff Counter data type, namely through its $\init$,
$\fetch$, $\incr$ and $\merge$ operations.

Regarding fault tolerance, the state $C_i$ is assumed to be stored in stable
storage, and assigning to it is assumed to be an atomic operation. This means
that temporary variables used in computing a data-type operation do not need
to be in stable storage, and an operation can crash at any point before
completing, in which case $C_i$ will remain unchanged. The functional style
used in defining the CRDT and algorithm emphasizes this aspect.

\mathcode`|="326A          



\begin{algorithm}[t]
\DontPrintSemicolon
\SetKwBlock{constants}{constants:}{}
\SetKwBlock{state}{state:}{}
\SetKwBlock{periodically}{periodically}{}
\SetKwBlock{on}{on}{}

\constants{
  $i$, globally unique node id \;
  $t_i$, node $i$ tier \;
  $n_i$, set of neighbors \;
}
\BlankLine

\state{
  $C_i$, handoff counter data type; initially, $C_i = \init(i, t_i)$ \;
}
\BlankLine

\on({$\fetch_i$}){
  return $\fetch(C_i)$ \;
}
\BlankLine

\on({$\incr_i$}){
  $C_i := \incr(C_i)$ \;
}
\BlankLine

\on({$\receive_{j,i}(C_j)$}){
  $C_i := \merge(C_i, C_j)$ \;
}
\BlankLine

\periodically(){
  let $j = \random(n_i)$ \;
  $\send_{i,j}(C_i)$ \;
}
\BlankLine

\caption{Distributed algorithm for a generic node $i$.}
\label{alg:dist_algo}

\end{algorithm}


\subsection{Handoff Counter Data Type}

Unfortunately, the Handoff Counter data type is not so trivial. On one hand, a
server which is receiving counter values from a client should be able to
garbage collect client specific entries in its state; on the other hand,
neither duplicate or old messages should lead to over-counting, nor lost
messages lead to under-counting. Towards this, a handoff counter has state that
allows a 4-way handshake in which some accounted value (a number of
increments) is moved reliably from one node to the other. To understand the
essence of the mechanism, the steps when a node $i$ is handing-off
some value to a node $j$, when no messages are lost,
are the following, as exemplified in Figure~\ref{fig:Handoff}:

\begin{enumerate}
\item Node $i$ sends its $C_i$ to node $j$; node $j$ does $C'_j:=\merge(C_j,
C_i)$; the resulting $C'_j$ has a \emph{slot} created for node $i$;
\item Node $j$ sends $C'_j$ to $i$; node $i$ performs $C'_i:=\merge(C_i, C'_j)$; the
resulting $C'_i$ has a \emph{token} specifically created for that slot,
into which the locally accounted value has been moved;
\item Node $i$ sends $C'_i$ to node $j$; node $j$ does a $C''_j:=\merge(C'_j,
C'_i)$; this merge, seeing the token matching the slot, acquires the accounted
value in the token and removes the slot from the resulting $C''_j$;
\item Node $j$ sends $C''_j$ to $i$; node $i$ performs $C''_i:=\merge(C'_i, C''_j)$;
seeing that the slot is gone from $C''_j$, it removes the token from the
resulting $C''_i$.
\end{enumerate}

\begin{figure}
\centering
\includegraphics[width=0.8\textwidth]{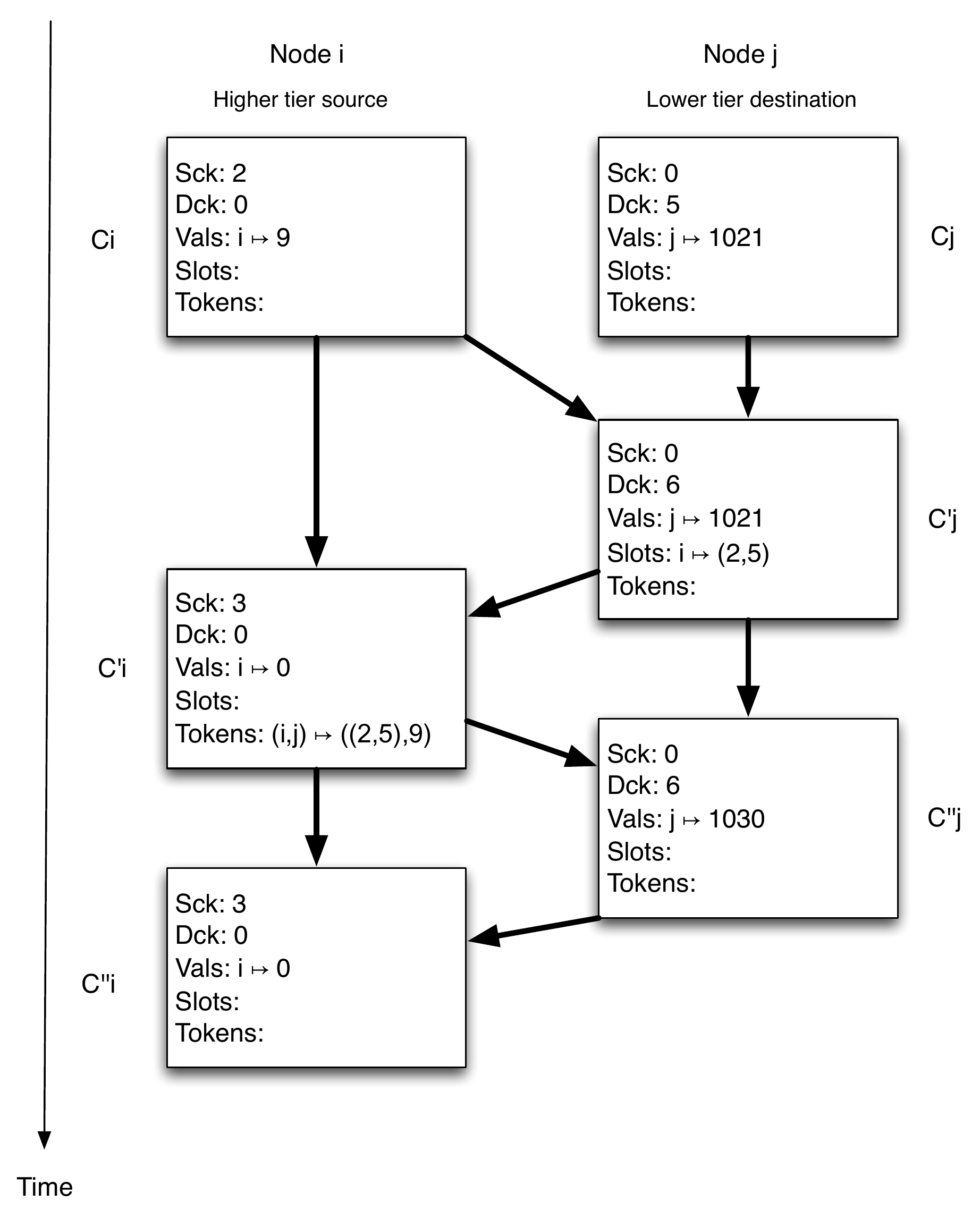}
\caption{A handoff from node i to j (only relevant fields are shown).}
\label{fig:Handoff}
\end{figure}

The end result of such an exchange is that an accounted value has been moved
from node $i$ to node $j$, and neither $i$ has a $j$ specific entry in its
state $C''_i$, nor $j$ has a $i$ specific entry in $C''_j$. Temporary entries
(slots and tokens) were garbage collected.

It should be noted that, although a given handoff takes 4 steps, when a pair
of nodes keep exchanging messages continuously, handoffs are pipelined: steps 1
and 3 are overlapped, as well as steps 2 and 4. When a message from a node $i$
arrives at a smaller tier node $j$, it typically carries a token, which is
acquired, filling the current slot at $j$, and a new slot is created; when the
``reply'' from $j$ arrives at $i$, it makes $i$ garbage collect the current
token, and a new token is created. This means that in the normal no-loss
scenario, each round-trip moves an accounted value (some number of increments)
from $i$ to $j$.

The mechanism must, however, ensure correctness no matter what communication
patterns may occur. Several properties are assured, namely:

\begin{itemize}
\item A given slot cannot be created more than once; even if it was created,
later removed and later a duplicate message arrives;
\item A token is created specifically for a given slot, and does not match
any other slot;
\item A given token cannot be created more than once; even if it was created,
later removed and later a duplicate message having the corresponding slot
arrives.
\end{itemize}

Towards this, the CRDT keeps a pair of logical clocks, \emph{source clock} and
\emph{destination clock}, that are used when a node plays the role of source
of handoff and destination of handoff, respectively.  A slot or token is
identified by the quadruple: source id, destination id, source clock,
destination clock.  When creating a slot the destination clock is incremented;
when creating a token for a given slot, the source clock is checked and
incremented.  This assures that neither a given slot nor a given token can be
created more than once.

Figure~\ref{fig:Handoff} shows an execution that moves a count of 9 from node
$i$ to node $j$, illustrating the evolution, after each merge, of the
subset of fields in the state that are more closely related to the handoff
procedure. 

The reason for two logical clocks per node (one would suffice for safety) is
to allow a middle tier node to play both roles (source and destination),
allowing the counter used for handoffs the node has started (as source) to
remain unchanged so that the handoff may complete, even if the node is a busy
server and is itself being the destination of handoffs. With a single clock
there would be the danger that during the round-trip to a smaller tier node,
messages from clients arrive, increasing the clock when creating a slot, and
making the handoff attempt to fail (as no token would be created) and need to
be repeated; under heavy load progress could be compromised.

Each handoff counter keeps a map $\vals$ with an entry which is only
incremented or added-to locally; in the case of tier 0 nodes there are also
entries regarding other tier 0 nodes. In other words, each node keeps a
structure similar to a version vector, mapping node ids to integers, which
has only the self entry in the case of non tier 0 nodes, and also other tier
0 nodes entries in the case of tier 0 nodes.

Towards ensuring local monotonicity in reporting, while allowing values to
move between nodes, each handoff counter has two integer fields,
$\val$ and $\below$, always updated in a non-decreasing way.
Field $\val$ keeps the maximum counter value that can be safely reported
according to local knowledge. Field $\below$ keeps a lower bound of values
accounted in strictly smaller tiers; namely, it summarizes knowledge about
tier 0 version vectors, avoiding the need for their dissemination to other
tiers.  The state of a handoff counter is then a record with the fields as
in Figure~\ref{HC-fields}.

\begin{figure}
\begin{description}
\item[$\id$:] node id;
\item[$\tier$:] node tier;
\item[$\val$:] maximum counter value that can be safely reported given local
knowledge;
\item[$\below$:] lower bound of values accounted in smaller tiers;
\item[$\vals$:] map from ids to integers; with a single self entry if tier other
than 0;
\item[$\sck$:] source clock -- logical clock incremented when creating tokens;
\item[$\dck$:] destination clock -- logical clock incremented when creating slots;
\item[$\slots$:] map from source ids to pairs $(sck, dck$) of logical clocks;
\item[$\tokens$:] map from pairs $(i, j)$ to pairs $((sck, dck), n)$
containing a pair of logical clocks and an integer;
\end{description}
\caption{Handoff Counter data type state (record fields)}
\label{HC-fields}
\end{figure}

The entries that keep tokens, slots and vals have been optimized, using maps as
opposed to sets, namely to obtain a fast lookup in the case of slots. This is
relevant as there may exist a considerable number of slots, depending on
the number of concurrent clients, while there are typically very few tokens
(just one in the more normal case). Such is possible due to the following:

\begin{itemize}
\item Each node $j$ needs to keep at most one slot for any given node $i$.
Therefore, a slot $(i, j, sck, dck)$ is kept as an entry mapping $i$ to
pairs $(sck, dck)$ in the slot map at $j$.
\item For each pair of nodes $i$ and $j$, there is only the need to keep
at most one token of the form $((i, j, sck, dck), n)$. However, such
token may be kept at nodes other than $i$. Tokens are stored in
maps from pairs $(i,j)$ to pairs  $((sck, dck), n)$.
\end{itemize}

The Handoff Counter data type is shown in Figure~\ref{HC-operations}.
The $\init$ operation creates a new CRDT instance (replica) for the counter;
it takes as parameters the node id and tier number;
it should be invoked only once for each globally unique id.
Operation $\fetch$ simply returns the $\val$ field, which caches the
higher counter value known so far.
Operation $\incr$ increments both the self entry in the ``version vector''
(i.e., $\vals_i(i)$) and the cached value in $\val$.
For the purposes of conciseness and clarity, in a definition of an operation
$\af{op}(C_i) \defeq \ldots$, the fields of $C_i$ can be accessed in the form
$\af{field}_i$, e.g., $\tokens_i$, and $C_i$ denotes a record with field $\id$
containing $i$; i.e., $C_i \defeq \{\id = i, \tier = \tier_i, \ldots \}$.

\begin{figure}
\begin{eqnarray*}
\init(i, t) &\defeq& \{\id = i, \tier = t,
 \val = 0, \below = 0, \sck = 0, \dck = 0,
\\ && \hphantom{\{}
 \slots=\{\}, \tokens = \{\}, \vals = \{i \mapsto 0\} \} \\
\fetch(C_i) &\defeq& \val_i \\
\incr(C_i) &\defeq& C_i\{\val = \val_i + 1, \vals = \vals_i\{i \mapsto
\vals_i(i) + 1\}\} \\
\merge(C_i, C_j) &\defeq&
\cachetokens(
\createtoken(
\discardtokens(
\aggregate(
\\&&
\mergevectors(
\createslot(
\discardslot(
\fillslots(
C_i, C_j), C_j),
\\&&
 C_j), C_j), C_j), C_j), C_j), C_j)
\end{eqnarray*}
\caption{Handoff Counter data type operations}
\label{HC-operations}
\end{figure}

The CRDT $\merge$ operation is by far the most complex one. It can be written
as the composition of 8 transformations that are successively applied, each
taking the result of the previous transformation and the received instance as
parameters. Each of these transformations takes care of a different aspect of
merging the two instances into a new one; they are presented in
Figure~\ref{HC-transformations}, where the following notation is used.

\paragraph{Notation}
We use mostly standard notation for sets and maps/relations. A map is a set of
$(k,v)$ pairs (a relation), where each $k$ is associated with a single $v$; to
emphasize the functional relationship we also use $k \mapsto v$ for entries in
a map. We use $M\{ \ldots \}$ for map update; $M\{x \mapsto 3\}$ maps $x$ to
$3$ and behaves like $M$ otherwise. For records we use similar notations but
with $=$ instead of $\mapsto$, to emphasize a fixed set of keys. We use
$\domainsub$ for domain subtraction; $S \domainsub M$ is the map obtained by
removing from $M$ all pairs $(k,v)$ with $k \in S$. We use set comprehension
of the forms $\{x \in S | P(x)\}$ or $\{f(x) | x \in S | P(x) \}$, or list
comprehensions, using square brackets instead.  The domain of a relation $R$
is denoted by $\dom(R)$, while $\fst(T)$ and $\snd(T)$ denote the first
and second component, respectively, of a tuple $T$.
 We use $\mergewith{f}{m}{m'}$ to represent joining maps while
applying $f$ to the values corresponding to common keys, i.e.,
$\mergewith{f}{m}{m'} =
\{(k,v) \in m | k \notin \dom(m')\} \union
\{(k,v) \in m' | k \notin \dom(m)\} \union
\{(k, f(m(k), m'(k))) | k \in \dom(m) \cap \dom(m')\}$.
To define a function or predicate by cases, we use $\kw{if} X \kw{then} Y
\kw{else} Z$ to mean ``$Y$ if $X$ is true, $Z$ otherwise''.

\begin{figure}
\begin{eqnarray*}
\fillslots(C_i, C_j) &\defeq&
C_i\{\vals = \vals_i\{i \mapsto \vals_i(i) + \sum[n | (\_,n) \in S]\},
\\&&
\hphantom{C_i\{}
 \slots = \dom(S) \domainsub \slots_i\}
\\&&
\kw{where} S \defeq \{ (src, n) | ((src, dst), (ck, n)) \in \tokens_j |
\\ &&
\hphantom{\kw{where} S \defeq \{ (src, n) | {}}
dst = i \land (src, ck) \in \slots_i \}
\\
\discardslot(C_i, C_j) &\defeq&
\kw{if} 
j \in \dom(\slots_i) \land \sck_j > \fst(\slots_i(j)) 
\\&&
\kw{then} C_i\{\slots = \{j\} \domainsub\slots_i\}
\\&&
\kw{else} C_i  \\
\createslot(C_i, C_j) &\defeq&
\kw{if}
\tier_i < \tier_j \land \vals_j(j) > 0 \land 
j \notin \dom(\slots_i)
\\&&
\kw{then}
C_i\{\slots = \slots_i\{j \mapsto (\sck_j, \dck_i)\},
\dck = \dck_i + 1\}
\\&&
\kw{else} C_i
\\
\mergevectors(C_i, C_j) &\defeq&
\begin{array}[t]{@{}l@{\ }l}
\kw{if} \tier_i = \tier_j = 0 &
\kw{then} C_i\{\vals = \mergewith{\max}{\vals_i}{\vals_j}\} \\
 & \kw{else} C_i
\end{array}
\\
\aggregate(C_i, C_j) &\defeq&
C_i\{\below = b, \val = v\}
\\&&
\kw{where} b \defeq
\begin{array}[t]{@{}l}
\kw{if} \tier_i = \tier_j
\kw{then} \max(\below_i, \below_j) \\
\kw{else}\ \kw{if}
\tier_i > \tier_j \kw{then}\max(\below_i, \val_j) \\
\kw{else} \below_i 
\end{array}
\\&&
\hphantom{\kw{where}{}} v \defeq
\begin{array}[t]{@{}l}
\kw{if} \tier_i = 0 \kw{then} \sum[n | (\_,n) \in \vals_i] \\
\kw{else}\ \kw{if}
\tier_i = \tier_j \kw{then} \max(\val_i, \val_j, b + \vals_i(i) + \vals_j(j)) \\
\kw{else} \max(\val_i, b + \vals_i(i))
\end{array}
\\
\discardtokens(C_i, C_j) &\defeq&
C_i\{ \tokens = \{ (k, v) \in \tokens_i | \neg P(k,v) \}
\\ &&
\kw{where}
P((src, dst), ((\_, dck), \_)) \defeq (dst = j) \land
\\ &&
\qquad\qquad
\begin{array}{l@{\ }l}
\kw{if} src \in \dom(\slots_j) & \kw{then} \snd(\slots_j(src)) > dck \\
                               & \kw{else} \dck_j > dck
\end{array}
\\
\createtoken(C_i, C_j) &\defeq&
\kw{if} i \in \dom(\slots_j) \land \fst(\slots_j(i)) = \sck_i
\\&&
\kw{then}
\begin{array}[t]{@{}l@{}l}
C_i\{ & \tokens = \tokens_i\{(i,j) \mapsto (\slots_j(i), \vals_i(i))\}, \\
& \vals = \vals_i\{i \mapsto 0\}, \\
& \sck = \sck_i + 1\}
\end{array}
\\&&
\kw{else} C_i
\\
\cachetokens(C_i, C_j) &\defeq&
\kw{if} \tier_i < \tier_j
\\&&
\kw{then} C_i\{\tokens = \mergewith{f}{\tokens_i}{t}\}
\\&&
\hphantom{\kw{then}}\kw{where} 
t \defeq \{((src, dst), v) \in \tokens_j | src = j \land dst \neq i\},
\\ &&
\hphantom{\kw{then}\kw{where}{}}
f((ck, n), (ck', n')) \defeq
\begin{array}[t]{l@{\ }l}
\kw{if} \fst(ck) \geq \fst(ck') &
\kw{then} (ck, n) \\
& \kw{else} (ck', n')
\end{array}
\\&&
\kw{else} C_i
\end{eqnarray*}
\caption{Handoff Counter auxiliary transformations in $\merge$.}
\label{HC-transformations}
\end{figure}

\paragraph{}
We now describe informally each of these transformations used in $\merge$.
We leave the formal proof of correctness for the next section.

\paragraph{$\fillslots(C_i, C_j)$}
Fills all possible slots in $C_i$ for which there are matching tokens in
$C_j$, removing them from $\slots_i$ and adding the counter values in the
tokens to the self version vector entry $\vals_i(i)$. We call this transfer of
a counter value in a token to the $\vals$ entry of the destination node
\emph{acquiring the token} and the corresponding slot removal \emph{filling
the slot}.

\paragraph{$\discardslot(C_i, C_j)$}
Discards a slot, if any, in $C_i$ for source $j$, that cannot ever be
possibly filled by a matching token, because $C_j$ assures no such token
will ever be generated. For such a slot that still remains in $C_i$ (and,
therefore, has not been filled by a matching token in $C_j$ by the
just applied $\fillslots$) this is the case if the source clock at $C_j$ is
greater than the corresponding value in the slot.

\paragraph{$\createslot(C_i, C_j)$}
Creates a slot in $C_i$ for a higher tier source node $j$, if $C_j$ has some
non-zero value to handoff and there is no slot for $j$ at $C_i$.
If a slot is created, the local destination clock is stored in it and
increased, preventing duplicate creation of the same slot in case of duplicate
messages.
A slot merely opens the possibility of node $j$ creating a
corresponding token; not all slots will have corresponding tokens, some will
be discarded or replaced by newer slots. In fact, the local knowledge in $C_i$
after discarding $j$ specific information makes it impossible to avoid creating
a slot upon a slow or duplicate message; this is not a problem
as such slots will never match any token, being eventually discarded upon
further communication.

\paragraph{$\mergevectors(C_i, C_j)$}
Merges the corresponding version vector entries, doing a pointwise maximum for
common ids. This is only done when merging two tier 0 replicas.

\paragraph{$\aggregate(C_i, C_j)$}
Performs a vertical aggregation step (from smaller to higher tiers up to the
current one), that updates the $\below$ and $\val$ fields according to the
knowledge provided by $C_j$. This can never decrease their current values.
The effect is to propagate values accounted in tier 0 version
vectors, while adding knowledge provided by intermediate nodes, up to $C_i$
and $C_j$.

\paragraph{$\discardtokens(C_i, C_j)$}
Discards from $C_i$ the tokens that have already been acquired by $C_j$;
this is the case for tokens with id $(src, j, \_, dck)$ if either there is
a slot $(src, j, \_, dck')$ at $C_j$ with $dck' > dck$ or if there is no slot
$(src, j, \_, \_)$ at $C_j$ and $\dck_j > dck$. Here $src$ is normally, but
not necessarily, equal to $i$, as $C_i$ can cache tokens from another source than $i$.

\paragraph{$\createtoken(C_i, C_j)$}
Creates a token, to which the currently accounted value in $\vals_i(i)$ is
moved, if there is a slot for $i$ in $C_j$ having a source clock equal to the
current one at $C_i$.
If a token is created, the local source clock is
increased, preventing duplicate creation of the same token in case of
duplicate messages.

\paragraph{$\cachetokens(C_i, C_j)$}
Keeps a copy of tokens generated at a higher tier node $j$ meant to some other
destination $k$. For each pair source-destination, older tokens (that must
have already been acquired) are replaced by newer ones. Caching tokens
provides availability under faults, as it allows a client $j$ in the middle of
a handoff to $k$, to delegate to $i$ the responsibility of finishing the
handoff, in case $j$ wants to terminate but either $k$ has crashed and is
recovering or the link between $j$ and $k$ is currently down.  Only tokens
that have been generated at node $j$ are cached (other tokens currently cached
at $j$ are not considered) so that alternate handoff routes are provided,
while preventing the flooding and the large sets of tokens that would result
from a transitive dissemination of tokens to other nodes.

\subsection{Implementation and Experimental Validation}

The development of Handoff Counters was made in tandem with a prototype
implementation and checked by a testing infrastructure that allows for
randomized runs over given topologies. These runs exercised the solution
robustness, in particular by replaying old messages out of order, and allowed
to detect and correct several corner cases, enabling the correction of subtle
bugs that existed in tentative versions of the mechanism.

Although testing does not ensure correctness, the current implementation has
successfully passed randomized traces with up to one hundred million steps,
both in giving the correct result and also in garbage collecting temporary
entries, making it a complement to a manual formal proof (which is prone to
human errors).

An experimentally robust solution was thus a prelude to the formal
correctness proof in the following section, and added an independent
assessment to the overall approach. The implementation and testing
infrastructure, written in Clojure, is publicly available in GitHub
(\url{https://github.com/pssalmeida/clj-crdt}), and serves as a complement to
this article.

\section{Correctness}
\label{correctness}


\begin{lemma} \label{lem-slot-once}
Any slot $(s, d, sck, dck)$ can be created at most once.
\end{lemma}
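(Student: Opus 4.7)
The plan is to localize where and how slots can come into existence, and then use the monotonicity of the destination clock to show uniqueness. First I would observe that a slot with identifier $(s, d, sck, dck)$ can only be added to the $\slots$ map at node $d$: inspecting the eight transformations in Figure~\ref{HC-transformations}, the only one that inserts into $\slots_i$ is $\createslot(C_i, C_j)$, and there the destination role is played by $i$. (All other transformations either leave $\slots_i$ unchanged or only remove entries from it, as in $\fillslots$ and $\discardslot$.) So the question reduces to: how many times at node $d$ can $\createslot$ insert the specific pair $j \mapsto (\sck_j, \dck_d)$ with $\dck_d = dck$?

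Next I would isolate an invariant on $\dck_d$: the value of $\dck_d$ at node $d$ is non-decreasing over the entire lifetime of $d$, and is strictly increased by exactly one whenever $\createslot$ actually creates a slot at $d$. Again this is a syntactic check over the transformations in Figure~\ref{HC-transformations}: only $\createslot$ writes to the $\dck$ field, and in the branch that creates a slot it sets $\dck = \dck_i + 1$. The $\init$ operation starts $\dck$ at $0$. Hence the sequence of $dck$ values stamped into newly created slots at $d$ is $0, 1, 2, \dots$ strictly increasing.

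From this it follows immediately that for fixed $d$, no two distinct slot creation events at $d$ share the same fourth component $dck$. Since slot creation of $(s,d,sck,dck)$ is the exclusive province of node $d$, any such slot is created at most once in the entire system.

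The one thing that requires a small amount of care, rather than genuine difficulty, is convincing oneself that slots are never propagated or duplicated across nodes by any merge step. That is why I would explicitly enumerate the transformations in $\merge$ and point out that none of them copies entries into $\slots_i$ from $\slots_j$; slots live only at their destination. Once that is in place the monotonicity of $\dck_d$ makes the conclusion immediate, and no further case analysis on message traces, reordering, or duplication is required.
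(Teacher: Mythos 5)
Your proposal is correct and follows essentially the same argument as the paper: only the destination node $d$ ever inserts a slot with destination $d$ (via $\createslot$), and since $\createslot$ increments $\dck_d$ when stamping it into the slot, no two creation events at $d$ can produce the same $dck$ value. Your extra check that no merge transformation copies entries from $\slots_j$ into $\slots_i$ is a harmless elaboration of the same reasoning, not a different route.
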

\begin{proof}
Each node uses its own id as $d$ in the slot; therefore, no two nodes can
create the same slot. In each node, a slot is only created when applying
$\createslot$, which also increments $\dck$ upon storing it in the slot;
therefore, a subsequent $\createslot$ in the same node cannot create the same
slot.
\end{proof}

\begin{lemma}
Any token with id $(s, d, sck, dck)$ can be created at most once.
\end{lemma}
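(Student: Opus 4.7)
The plan is to mirror the structure of Lemma~\ref{lem-slot-once}, this time exploiting the source clock $\sck$ instead of the destination clock $\dck$. First I would observe that in the definition of $\createtoken(C_i, C_j)$, the freshly added entry has key $(i, j)$; consequently, a token with source $s$ can be produced only when that operation is executed at node $s$ itself. So it suffices to rule out two creation events of the same 4-tuple occurring at the single node $s$.

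Next I would pin down the value of $sck$ stamped into a newly created token. The guard $i \in \dom(\slots_j) \land \fst(\slots_j(i)) = \sck_i$ forces the $sck$ component of the stored pair $\slots_j(i)$ to equal the current $\sck_i$ at the moment of creation. Hence, if node $s$ ever creates a token whose identifier is $(s, d, sck, dck)$, then at that instant $\sck_s = sck$.

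I would then argue that $\sck_s$ is strictly monotonically increasing at each token creation and is never decreased elsewhere. Inspecting Figure~\ref{HC-operations} and Figure~\ref{HC-transformations}: $\init$ sets it to $0$; $\incr$, $\fetch$, and the other seven merge transformations ($\fillslots$, $\discardslot$, $\createslot$, $\mergevectors$, $\aggregate$, $\discardtokens$, $\cachetokens$) never touch it; only $\createtoken$ modifies it, and strictly to $\sck_i + 1$. Thus once $\sck_s$ has taken some value $v$ and the corresponding $\createtoken$ fires, $\sck_s$ becomes $v+1$ and can never again equal $v$.

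Combining these three steps yields the lemma: a second creation at $s$ of a token carrying source clock $sck$ would require $\sck_s$ to equal $sck$ a second time, which contradicts strict monotonicity. The only mildly subtle point---and the one I would be careful to state explicitly---is the exhaustive case analysis showing that no transformation other than $\createtoken$ alters $\sck$; once that is in hand, the rest of the argument is essentially identical in spirit to Lemma~\ref{lem-slot-once}.
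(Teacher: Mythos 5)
Your proposal is correct and follows essentially the same argument as the paper: only node $s$ itself can create a token with source $s$, and since $\createtoken$ stamps the current $\sck_s$ into the token and then increments it (no other operation touching $\sck$), the same identifier can never be produced twice. Your explicit check that no other transformation modifies $\sck$ and that the guard pins the stored clock to the current $\sck_s$ just spells out details the paper leaves implicit.
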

\begin{proof}
Each node uses its own id as $s$ when creating the token; therefore, no two
nodes can create the same token. In each node $\createtoken$ also increments
$\sck$ upon storing it in the token; therefore, a subsequent $\createtoken$ in
the same node cannot create the same token.
\end{proof}

\begin{lemma}
Any token with id $(s, d, sck, dck)$ can be acquired at most once.
\end{lemma}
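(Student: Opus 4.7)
The plan is to tie ``acquisition'' of a token to the presence of a matching slot at the destination, and then combine the uniqueness of slots (Lemma~\ref{lem-slot-once}) with the monotonicity of the destination clock to preclude a second matching event.

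First I would locate where acquisition takes place. The only transformation that moves a token's integer into some node's $\vals$ entry is $\fillslots(C_i, C_j)$, where a token $((src, dst), (ck, n)) \in \tokens_j$ contributes to the auxiliary set $S$ only if $dst = i$ and $(src, ck) \in \slots_i$. Hence a token with id $(s, d, sck, dck)$ can be acquired only at the node whose identifier is $d$, and only when $\slots_d$ contains the entry $s \mapsto (sck, dck)$. In the same atomic step, the key $s$ is removed from $\slots_d$ via $\dom(S) \domainsub \slots_d$, so immediately after acquisition the required slot is gone from $d$'s state.

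Next I would argue that this exact slot entry can never return. Going through the eight transformations in Figure~\ref{HC-transformations}, the only one that writes new entries into $\slots_i$ is $\createslot$, and it always stores $(\sck_j, \dck_i)$ using the \emph{current} value of $\dck_i$ before incrementing $\dck_i$. Since $\dck_d$ is never decreased anywhere in the data type and was already strictly larger than $dck$ right after the original slot $(s, d, sck, dck)$ was created (by Lemma~\ref{lem-slot-once} this creation happened exactly once, and it incremented $\dck_d$), every subsequent $\createslot$ at $d$ produces a slot whose $dck$ component is strictly greater than $dck$. Thus the entry $s \mapsto (sck, dck)$ can never reappear in $\slots_d$.

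Combining the two observations, after the first acquisition of $(s, d, sck, dck)$ the matching slot needed by $\fillslots$ is permanently absent from $\slots_d$, so no later invocation of $\fillslots$ at $d$ can include that token in its set $S$, and the token is never acquired again. I do not anticipate any real obstacle: the only care required is a routine case-check confirming that no transformation other than $\createslot$ inserts entries into the slots map and that $\dck_i$ is strictly monotone, both of which are immediate from the definitions in Figure~\ref{HC-transformations}.
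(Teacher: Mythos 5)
Your proof is correct and follows essentially the same route as the paper's: acquisition can only happen via $\fillslots$ at node $d$ with the matching slot present, the slot is removed in that same step, and it can never be recreated, so no second acquisition is possible. The only difference is cosmetic --- where the paper simply invokes Lemma~\ref{lem-slot-once} for non-recreation of the slot, you additionally inline its underlying argument (only $\createslot$ inserts slots, and the monotone $\dck_d$ forces strictly larger clock components).
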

\begin{proof}
Such token can only be acquired in node $d$ having a corresponding slot 
$(s, d, sck, dck)$ while performing the $\fillslots$ function, which
removes this slot from the resulting state, preventing a subsequent acquisition
of the same token, as due to Lemma~\ref{lem-slot-once} this slot cannot be
recreated.
\end{proof}

\begin{proposition} \label{prop-token-created}
Given a token $T$ with id $(s, d, sck, dck)$:
(i) $T$ will not be removed from any node before it has been acquired;
(ii) a corresponding slot $S$ will exist in node $d$ between the time when
$T$ is created and the time when $T$ is acquired.
\end{proposition}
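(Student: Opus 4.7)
My plan is to prove (i) and (ii) simultaneously, as a single invariant maintained by induction on the sequence of atomic system events. Let $\tau_S$ be the creation time of $S$ at $d$, $\tau_0>\tau_S$ the creation time of $T$, and $\tau_1$ the first acquisition time of $T$ (possibly $\infty$). I will show that throughout $[\tau_S,\tau_1)$: (a) $\slots_d$ contains the entry $s\mapsto(sck,dck)$, and (b) for every node $i$ into which $T$ has been inserted so far---either by $\createtoken$ at $s$ or by a subsequent $\cachetokens$---$T$ remains in $\tokens_i$. Part (ii) is exactly (a) restricted to $[\tau_0,\tau_1)$, and part (i) follows from (b).

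For the base case, $\createtoken$ succeeds at $\tau_0$ only if $s$ has just received a state whose $\slots$ maps $s$ to $(sck,dck)$; since $\slots$-entries are only created at their destination node, this state originates at $d$, so by Lemma~\ref{lem-slot-once} the slot in question is the unique $S$. On $[\tau_S,\tau_0)$, $S$ cannot disappear from $d$: $\fillslots$ would need a matching token (which has not yet been created), and $\discardslot$ would need a message from $s$ with $\sck_s>sck$ (but $\sck_s$ is only advanced by $\createtoken$, which has not yet fired).

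The inductive step verifies each of the eight stages inside $\merge$. The subtle cases are $\discardslot$ at $d$ and $\discardtokens$ at an arbitrary node, each of which threatens one half of the invariant while being forbidden by the other. If $\discardslot$ at $d$ removes $S$, the incoming message from $s$ carries $\sck_s>sck$ and was sent after $\tau_0$; the inductive (b) at that sending time says the same message also contains $T\in\tokens_s$, and since $\fillslots$ precedes $\discardslot$ in the $\merge$ composition, $T$ is acquired first---the event under scrutiny is exactly $\tau_1$. Dually, for $\discardtokens$ at some node $i$ to drop $T$, the received $C_d$ must witness at some sending time $t_s$ either an $s$-slot with destination clock $>dck$, or the absence of any $s$-slot combined with $\dck_d>dck$; if $t_s\ge\tau_S$ both witnesses contradict (a), while if $t_s<\tau_S$ monotonicity of $\dck_d$ gives $\dck_d^{t_s}\le dck$, ruling both out. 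The remaining checks are routine: $\createslot$ at $d$ is gated on $s\notin\dom(\slots_d)$; $\cachetokens$ could only evict $T$ at $(s,d)$ if a token with strictly larger source clock existed, which would require a second $s$-slot at $d$ and is precluded by (a); the other stages touch neither $S$ nor $T$.

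The hard part will be the mutual protection between (a) and (b) in the two discarding cases. Neither half of the invariant can be maintained in isolation---the conditions enabling $\discardslot$ and $\discardtokens$ really are witnessed by events that occur after $\tau_0$---and what saves correctness is the fixed left-to-right order of the eight transformations inside $\merge$, together with the at-most-one-$s$-slot constraint at $d$. I expect the most delicate bookkeeping to lie in carefully identifying, for each offending case, the precise sending time at which the inductive hypothesis must be invoked.
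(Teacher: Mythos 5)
Your plan is essentially the paper's own proof: a joint induction on the trace of events in which the slot half and the token half protect each other, with the fixed left-to-right order of the transformations inside $\merge$ ($\fillslots$ before $\discardslot$, $\discardtokens$ before $\createtoken$) and clock monotonicity doing the real work. Your handling of $\discardslot$ at $d$ and of $\discardtokens$ at an arbitrary holder is exactly the paper's argument; strengthening the slot claim to the whole interval $[\tau_S,\tau_1)$ is a harmless variation, and your base case is fine once the phrase ``$\createtoken$ has not yet fired'' is read as the monotonicity fact that every value of $\sck_s$ before $\tau_0$ is at most $sck$ (which follows because $\sck_s$ must still equal $sck$ when $T$ is created).

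There is one concrete omission: you dismiss $\createtoken$ among the stages that ``touch neither $S$ nor $T$'', but it can touch $T$ -- the paper devotes half of its part-(i) analysis to exactly this case. If $\createtoken$ fires again at $s$ with destination $d$, the new token overwrites the $\tokens_s$ entry under the key $(s,d)$, destroying $T$ at its origin node. Ruling this out needs an argument: the guard requires a (possibly stale) state of $d$ whose $s$-slot has source clock equal to the current $\sck_s \geq sck+1$; but any $s$-slot present at $d$ before $\tau_1$ has source clock at most $sck$ -- by your invariant (a) during $[\tau_S,\tau_1)$, and by monotonicity of $\sck_s$ for slots created from older messages -- so the guard cannot be satisfied before $\tau_1$. (The paper argues it slightly differently: such a firing would mean $d$ had received a post-$\tau_0$ state of $s$ containing $T$, hence had already acquired it, and since $\discardtokens$ precedes $\createtoken$ inside $\merge$, the overwrite happens only after acquisition.) Note that the same observation, that no $(s,d)$-token with source clock greater than $sck$ can exist before $\tau_1$, is what makes your $\cachetokens$ check airtight -- a case you do cover and the paper passes over silently. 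With the $\createtoken$ case added, your plan is complete and coincides with the paper's proof.
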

\begin{proof}
By induction on the trace of the actions performed by the system, using
(i) and (ii) together in the induction hypothesis. The only relevant action is
$\receive_{j,i}(C_j)$ of a message previously sent by a node $j$ to a node
$i$, with the corresponding merge being applied to the state of $i$. Given
the asynchronous system model allowing message duplicates, we must assume that
$C_j$ can be the state of $j$ at any point in the past, due to some
$\send_{j,i}$ action that resulted in a message arbitrarily delayed and/or
duplicated.

Regarding (i), a token $T$ with id $(s, d, sck, dck)$ can only be removed
either: (1) In $\discardtokens$ when merging with a state $C$ from $d$, having
either a slot $(s, d, \_, dck')$ with $dck' > dck$, or with no slot for $s$
and the destination clock in $C$ greater than $dck$; either way, given that
slots for destination $d$ are created with increasing destination clock
values, it implies that $C$ corresponds to a time after slot $S$ was created.
By the induction hypothesis, $S$ would have existed until $T$ was acquired,
and as $S$ is absent from $C$, this implies that $T$ was already acquired. (2) Or
$T$ could be removed by the $\tokens$ map entry for $(s,d)$ being overwritten
in $\createtoken$; this last case is not possible because: tokens are created
using the current source clock, which is then incremented; for a token with id
$(s, d, \sck_s, \_)$ to be created, a received counter state $C$ from $d$ must
contain a slot $(s, d, sck', \_)$ with $sck' = \sck_s$. This means that $d$
would have previously received $T$ from $s$ and from the induction hypothesis,
$d$ would have had a corresponding slot and would have already acquired $T$
filling the slot. When $C$ arrived at $s$, $T$ would be discarded by
$\discardtokens$ before invoking $\createtoken$, i.e., as in the first case
above.

Regarding (ii), $T$ is created, in $\createtoken$, only if a corresponding slot
has been created at some previous time in node $d$; this slot can only be
removed either: in $\fillslots$, when $T$ is acquired; or in $\discardslot$,
when merging with a counter $C$ from node $s$ whose source clock is greater
than $sck$, implying a state in $s$ after $T$ has been created. But in this
case this slot at node $d$ cannot reach the $\discardslot$ function
as, by the induction hypothesis, $T$ would be present in $C$ and would have
been acquired by $\fillslots$, filling the slot, just before invoking
$\discardslot$.
\end{proof}

\begin{lemma}
\label{token-eventually-acquired}
Any token with id $(s, d, sck, dck)$ will be eventually acquired.
\end{lemma}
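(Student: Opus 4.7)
The plan is to combine the two invariants of Proposition~\ref{prop-token-created} with the fairness clause of the system model in Section~\ref{system-model} to obtain a contradiction from assuming $T$ is never acquired. The key observation is that as long as $T$ remains unacquired, both $T \in \tokens_s$ and the matching slot $S$ (stored as the entry $s \mapsto (sck, dck)$ in $\slots_d$) persist forever, so any single delivery of a sufficiently recent state of $s$ at $d$ will trigger acquisition via $\fillslots$.

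First I would assume for contradiction that $T$ is never acquired, so that the invariants of both parts of Proposition~\ref{prop-token-created} hold perpetually after $T$'s creation: $T$ is continuously present in $\tokens_s$, and $S$ is continuously present in $\slots_d$. Next I would note that the creation of $T$ via $\createtoken(C_s, C_d)$ required $s$ to have received a state from $d$ carrying a slot indexed by $s$; hence $d$ lies in $s$'s neighbor set in the overlay. Under the algorithmic assumption that $s$ continues to execute the periodic send and picks each neighbor (including $d$) infinitely often, $s$ sends its current state to $d$ infinitely many times; the network fairness then guarantees that infinitely many of these are delivered.

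Finally I would inspect what happens at the first such delivery: node $d$ applies $\merge(C_d, C_s)$, whose very first transformation $\fillslots(C_d, C_s)$ collects all tokens $((src, dst), (ck, n)) \in \tokens_s$ with $dst = d$ and $(src, ck) \in \slots_d$. Instantiating with $src = s$, $dst = d$, and $ck = (sck, dck)$, the token $T$ satisfies these conditions because, by the two invariants above, $T$ is still in $\tokens_s$ and $S$ is still in $\slots_d$. Therefore $T$ is acquired at this merge, contradicting the assumption.

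The main obstacle is justifying that some node still holding $T$ keeps sending to $d$ indefinitely. In the clean case where $s$ does not terminate, this is immediate from the periodic send combined with fair random neighbor selection plus the delivery fairness. If $s$ may terminate before acquisition, the argument must be refined to invoke the $\cachetokens$ mechanism, showing that a non-terminating node holding a cached copy of $T$ (with the same id) keeps sending toward $d$ along a surviving overlay path; once the appropriate fairness premise is stated, the rest of the proof reduces to the direct application of the invariants of Proposition~\ref{prop-token-created} sketched above.
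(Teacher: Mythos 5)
Your proposal is correct and follows essentially the same route as the paper: it uses the two invariants of Proposition~\ref{prop-token-created} to keep the token at $s$ and the matching slot at $d$ until acquisition, then combines the periodic gossip from $s$ to its neighbor $d$ with the system model's fairness guarantee so that a state containing the token eventually reaches $d$, where $\fillslots$ acquires it. The caveat you raise about $s$ terminating lies outside the formal system model (nodes crash but recover; retirement is only a practical extension), so the clean case you treat is exactly the one the paper's proof covers.
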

\begin{proof}
Such token, created at node $s$, must have resulted from direct message
exchanges between $s$ and a smaller tier node $d$.
From Proposition~\ref{prop-token-created}, this token will remain at $s$
and a corresponding slot will exist at $d$ until the token is acquired.
As $s$ will keep sending its $C_s$ to its neighbors, therefore to $d$, and
from the system model assumptions (Section~\ref{system-model}) messages
eventually get through, if the token has not yet been acquired (e.g., by
communication between $d$ and some other node caching the token),
$C_s$ containing the token will eventually arrive at $d$, which will
acquire it.
\end{proof}

\begin{definition}[Enabled token]
A token $((s, d, sck, dck), n)$ is called \emph{enabled} if there exists a
corresponding slot $(s, d, sck, dck)$ at node $d$. The set of enabled tokens
in a configuration (set of replicas) $C$ is denoted $E_C$.
\end{definition}

Whether a token is enabled is a global property, not decidable by nodes
holding the token with local information.  From
Proposition~\ref{prop-token-created}, a token $T$ is enabled when created; it
remains enabled until it is acquired, when the corresponding slot is filled.
We also remark that $E_C$ is the \emph{set} (as opposed to multiset) of enabled
tokens; the presence of duplicates of some token $T$ created at one node and
cached at some other node(s) is irrelevant.

\begin{lemma}
\label{below-val-non-decreasing}
Fields $\below$ and $\val$ are non-decreasing.
\end{lemma}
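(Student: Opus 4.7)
The plan is to identify every operation that modifies $\val$ or $\below$ and verify monotonicity case by case. After $\init$ sets both fields to $0$, the state can only change via $\incr$ or via the eight transformations composed in $\merge$ (Figure~\ref{HC-transformations}); of these transformations, $\val$ and $\below$ are written only inside $\aggregate$.

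For $\below$, the three defining cases of $b$ in $\aggregate$ yield $b = \max(\below_i, \below_j)$, $b = \max(\below_i, \val_j)$, or $b = \below_i$, so $b \geq \below_i$ in every case. For $\val$, the $\incr$ operation trivially raises it by one, and inside $\aggregate$ the two clauses for non-zero tiers set $v$ to a maximum that explicitly includes $\val_i$, giving $v \geq \val_i$ immediately. No transformation applied after $\aggregate$ within $\merge$ touches either field, so the new values survive to the end of the merge.

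The delicate case is the tier-$0$ branch of $\aggregate$, which sets $v = \sum[n | (\_,n) \in \vals_i]$ on the already-transformed state. My plan is to establish as an auxiliary invariant that between actions every tier-$0$ node satisfies $\val_i = \sum[n | (\_,n) \in \vals_i]$. Given this invariant, non-decrease of $\val_i$ across a merge reduces to the observation that the only transformations in $\merge$ touching $\vals_i$ at a tier-$0$ node are $\fillslots$ (which adds a non-negative sum to $\vals_i(i)$) and $\mergevectors$ (pointwise maxima with $\vals_j$); both can only grow entries of $\vals_i$, so the sum that $\aggregate$ reads is at least the original $\val_i$. The invariant itself is proved by induction on the action trace: $\init(i, 0)$ gives $0 = 0$, $\incr$ bumps both sides by $1$, and $\merge$ restores the invariant since $\aggregate$ explicitly reassigns $\val_i$ to the current sum at the end.

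The main obstacle is to exclude $\createtoken$ — which resets $\vals_i(i)$ to $0$ — from ever firing at a tier-$0$ node, as otherwise both the invariant and monotonicity would fail. This is ruled out by inspecting preconditions: $\createtoken(C_i, C_j)$ requires $i \in \dom(\slots_j)$, meaning $j$ previously executed $\createslot$ producing a slot with source $i$, and $\createslot$ requires the source's tier to strictly exceed the destination's tier; hence $\tier_i > \tier_j \geq 0$. With $\createtoken$ thus excluded at tier $0$, the argument above goes through and both $\val$ and $\below$ are non-decreasing at every tier.
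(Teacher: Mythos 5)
Your proof is correct and follows essentially the same route as the paper's: induction on the action trace, noting that only $\incr$ and the $\aggregate$ step of $\merge$ write $\val$ and $\below$, with the maxima handling all non-tier-0 cases and the tier-0 case resting on the fact that tier-0 nodes never create tokens, so their $\vals$ entries (and hence the sum assigned to $\val$) only grow. You merely make explicit two points the paper leaves implicit -- the auxiliary invariant $\val_i = \sum[n \mid (\_,n) \in \vals_i]$ at tier-0 nodes, and the derivation from $\createslot$'s tier precondition that $\createtoken$ can never fire at a tier-0 node -- which strengthens rather than changes the argument.
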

\begin{proof}
By induction on the trace of the actions performed by the system. These fields
only change by an $\incr_i$ at node $i$, which increments $\val_i$, or when
doing a merge when receiving a message, in $\aggregate$, which either updates
$\below$ and $\val$ using a maximum involving the respective current value,
or stores in $\val_i$ the sum of $\vals_i$ entries, if $i$ is a tier 0 node,
which is also non-decreasing, as $\vals_i$ for tier 0 nodes contain a set of
entries always from tier 0 nodes, only updated by a pointwise maximum (as tier
0 nodes never create tokens).
\end{proof}

\auxfun{CTV}

\begin{definition}[Cumulative Tier Value]
In a configuration $C$, the cumulative tier value for tier $k$, written
$\CTV_C(k)$, is the sum, for all nodes with tier up to $k$,
of the self component of the $\vals$ field plus the tokens created by
these nodes that are still enabled, i.e.:
\begin{eqnarray*}
\CTV_C(k) &\defeq &\sum [ \vals_i(i) | C_i \in C | \tier_i \leq k ] + {} \\
                 &&\sum [ n | ((i,\_,\_,\_),n) \in E_C | \tier_i \leq k ].
\end{eqnarray*}
\end{definition}

\begin{lemma}
\label{CTV-non-decreasing}
For each $k$, $\CTV_C(k)$ is non-decreasing, i.e., for any
transition between configurations $C$ and $C'$, $\CTV_C(k) \leq \CTV_{C'}(k)$.
\end{lemma}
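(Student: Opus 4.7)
The plan is induction on the trace of actions, with case analysis at each transition. Only $\incr_i$ and $\receive_{j,i}(C_j)$ alter replica state. The $\incr_i$ case is immediate: $\vals_i(i)$ rises by one, making $\CTV_C(k)$ go up by one when $\tier_i \leq k$ and unchanged otherwise. For a merge, I chase $\CTV$ through the composition of the eight sub-transformations in order and show each is non-decreasing.

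The transformations that trivially preserve both summands of $\CTV$ are $\aggregate$ (only touches $\below$ and $\val$), $\createslot$ (adds a slot but creates no token, and no pre-existing enabled token can match the new slot because its $dck$ is freshly allocated from the monotonic $\dck_i$), $\cachetokens$ (copies token data, but $E_C$ is a set of enabled token ids, so copies are invisible), and $\mergevectors$. For $\mergevectors$, note that tier 0 nodes never invoke $\createtoken$ (its precondition combined with $\createslot$'s $\tier_j < \tier_i$ requirement forces the source to have positive tier), so for tier 0 $i$ the self entry $\vals_i(i)$ is monotone; any cached $\vals_j(i)$ at $j$ was propagated from an earlier state of $i$ and is therefore at most the current $\vals_i(i)$, so the pointwise max leaves the self entry fixed, while other entries do not contribute to $\CTV$.

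The two value-carrying transformations, $\fillslots$ and $\createtoken$, just move value between the two summands of $\CTV$. In $\fillslots$, an acquired token with source $s$ and value $n$ leaves $E_C$ while $n$ is added to $\vals_i(i)$; because $\createslot$ enforced $\tier_s > \tier_i$ when the matching slot was installed, a split on whether $\tier_s \leq k$ gives either a balanced transfer (when $\tier_s \leq k$, which forces $\tier_i < k$), a pure gain (when $\tier_i \leq k < \tier_s$), or no change (when $\tier_i > k$). Symmetrically, $\createtoken$ at $i$ zeroes $\vals_i(i) = n$ and installs a new enabled token with source $i$ and value $n$ (enabled because the precondition supplies the matching slot at $j$), with the same case analysis.

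The main obstacle is showing that $\discardslot$ and $\discardtokens$ never shrink $E_C$. For $\discardslot$, I appeal directly to Proposition~\ref{prop-token-created}(ii): the argument in its proof is precisely that whenever the $\discardslot$ guard fires for a slot whose corresponding token has already been created, part~(i) keeps that token present in $C_j$, so $\fillslots$ (applied earlier in the composition) has already acquired it and removed the slot; hence any slot that $\discardslot$ actually deletes has no matching enabled token. For $\discardtokens$, its removal condition witnesses either a superseding slot at $j$ or that $\dck_j$ has moved past the token's $dck$ with no slot present, and slots once removed cannot reappear (Lemma~\ref{lem-slot-once} together with Proposition~\ref{prop-token-created}), so the removed token has already been acquired and is absent from $E_C$; deleting stale copies from $\tokens_i$ is invisible to the set $E_C$. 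Composing these observations over all eight sub-transformations yields $\CTV_{C'}(k) \geq \CTV_C(k)$.
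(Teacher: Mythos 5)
Your proof is correct and takes essentially the same route as the paper's: the paper's (much terser) argument likewise decomposes $\CTV$ into the self $\vals$ entries plus enabled-token values, observes that token creation is a balanced transfer and acquisition is balanced or a gain, and relies on Proposition~\ref{prop-token-created} (with Lemma~\ref{lem-slot-once}) for the fact that slot/token discards never touch enabled tokens. Your per-transformation case analysis of the eight merge steps simply makes explicit what the paper's proof leaves implicit.
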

\begin{proof}
For any node $i$, the only time $\vals_i(i)$ can decrease is when a token is
created, enabled, and the value is moved to the token; in this case
$\CTV_{C'}(k)$ remains unchanged for all $k$. When a token holding value $n$
ceases to be enabled (being acquired), $n$ is added to the $\vals_j(j)$ field
for some smaller tier node $j$; this makes $\CTV_{C'}(k)$ either unchanged or
greater.
\end{proof}

\begin{lemma}
\label{CTV-monotonic}
$\CTV_C(k)$ is monotonic over $k$, i.e.,
$k_1 \leq k_2 \implies \CTV_C(k_1) \leq \CTV_C(k_2)$.
\end{lemma}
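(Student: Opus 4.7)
The plan is essentially to observe that $\CTV_C(k)$ is a sum of non-negative quantities indexed by nodes of tier at most $k$; enlarging $k$ only enlarges the index set, and can therefore only enlarge the sum. So the proof has two parts: verifying that all summands are non-negative, and pointing out the set-inclusion on the index sets.

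First I would establish the non-negativity of summands. For the $\vals_i(i)$ terms, a straightforward induction on the trace shows that $\vals_i(i) \geq 0$ always: it is initialised to $0$ in $\init$, it is either incremented by $\incr$, increased by a non-negative amount in $\fillslots$ (the tokens' $n$ fields), or reset to $0$ in $\createtoken$; no other transformation touches the self entry. For the token values $n$ appearing in the second sum, each enabled token was created by some $\createtoken$ action and its value is exactly the value of $\vals_s(s)$ at the moment of creation; by the preceding argument this is non-negative.

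Next, since $k_1 \leq k_2$, any node $i$ satisfying $\tier_i \leq k_1$ also satisfies $\tier_i \leq k_2$. Hence the multiset of indices appearing in the first (resp.\ second) sum defining $\CTV_C(k_1)$ is a sub-multiset of those appearing in the first (resp.\ second) sum defining $\CTV_C(k_2)$. Combined with the non-negativity of every summand, this gives $\CTV_C(k_1) \leq \CTV_C(k_2)$ term-by-term.

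I do not expect any real obstacle; the only thing to be slightly careful about is phrasing the non-negativity of $\vals_i(i)$ cleanly, since several transformations touch the $\vals$ map. Citing the list of transformations from Figure~\ref{HC-transformations} and observing that none of them can make $\vals_i(i)$ negative should suffice.
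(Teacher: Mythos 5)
Your proof is correct and matches the paper's approach: the paper simply declares the lemma ``trivial from the $\CTV$ definition,'' which implicitly rests on exactly the two observations you make explicit (the index set for tier $\leq k_1$ is contained in that for tier $\leq k_2$, and every summand is non-negative). Your extra care in verifying non-negativity of $\vals_i(i)$ and of token payloads is a sound filling-in of a detail the paper leaves unstated.
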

\begin{proof}
Trivial from the $\CTV$ definition.
\end{proof}

\begin{proposition}
\label{prop-below-val}
For any counter replica $C_i$ in a configuration $C$:
(i) $\below_i \leq \CTV_C(\tier_i - 1)$;
(ii) $\val_i \leq \CTV_C(\tier_i)$.
\end{proposition}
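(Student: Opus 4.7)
My plan is to prove (i) and (ii) together by induction on the trace of system actions, because the proof of (ii) in the merge case will draw on the freshly-established bound on $\below_i$. The base case ($\init$) is trivial since every field is zero and $\CTV_C(k) = 0$ for all $k$. Actions $\send$ and $\fetch$ do not change state. For $\incr_i$, both $\val_i$ and $\vals_i(i)$ grow by one, so $\CTV_C(\tier_i)$ grows by one as well and (ii) is preserved; $\below_i$ is untouched. The substantive case is a $\receive_{j,i}$ triggering $\merge(C_i, C_j)$, where $C_j$ is a (possibly stale) snapshot of $j$'s state at some prior configuration $C^*$, and $C'$ denotes the post-merge configuration.

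Inside the merge, only $\aggregate$ modifies $\below_i$ and $\val_i$. I first establish (i) by the case analysis in $\aggregate$. When $\tier_i = \tier_j$, the new $\below_i$ is $\max(\below_i, \below_j)$ and both arguments are bounded by $\CTV_{C'}(\tier_i - 1)$ via the IH (on $C$ and on $C^*$) and Lemma~\ref{CTV-non-decreasing}. When $\tier_i > \tier_j$, the new value is $\max(\below_i, \val_j)$; the IH on $j$ at $C^*$ gives $\val_j \leq \CTV_{C^*}(\tier_j)$, and Lemma~\ref{CTV-monotonic} with $\tier_j \leq \tier_i - 1$, combined with Lemma~\ref{CTV-non-decreasing}, bounds this by $\CTV_{C'}(\tier_i - 1)$. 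When $\tier_i < \tier_j$, $\below_i$ is unchanged and the IH lifts through Lemma~\ref{CTV-non-decreasing}.

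For (ii), writing $b = \below_i'$ (now bounded by $\CTV_{C'}(\tier_i - 1)$), consider first the non-tier-zero branches, where $\val_i'$ equals $\max(\val_i, \val_j, b + \vals_i(i) + \vals_j(j))$ or $\max(\val_i, b + \vals_i(i))$. The $\val_i$ and $\val_j$ summands are disposed of by the IH and Lemmas~\ref{CTV-non-decreasing} and \ref{CTV-monotonic} as above. The composite summand is bounded by combining $b \leq \CTV_{C'}(\tier_i - 1)$ with the fact that $\vals_i(i)$, and in the same-tier case also the message's $\vals_j(j)$, count values that live either in self-entries of tier-$\tier_i$ nodes or in enabled tokens with tier-$\tier_i$ source, all of which are charged to $\CTV_{C'}(\tier_i) - \CTV_{C'}(\tier_i - 1)$. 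Making this precise relies on an auxiliary tracking property: any value counted in $\vals_k(k)$ at send time remains in $C'$ either in some $\vals$-entry or in an enabled token, with its source-tier unchanged, because $\createtoken$ moves value from $\vals_k(k)$ into a token of the same source-tier and $\fillslots$ moves an enabled token into the self-entry of a strictly smaller-tier destination.

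The principal obstacle is the tier-zero case of (ii). There $\val_i' = \sum[n | (\_,n) \in \vals_i']$ after $\mergevectors$, and I must bound this by $\CTV_{C'}(0) = \sum_{m : \tier_m = 0} \vals_m^{C'}(m)$, noting that no token can have a tier-zero source because $\createslot$ requires the destination to have strictly smaller tier. I discharge this with a second auxiliary invariant: for every pair of tier-zero nodes $i$ and $m$, $\vals_i(m) \leq \vals_m(m)$. This is proved by a parallel induction, because $\vals_m(m)$ for tier-zero $m$ is monotone non-decreasing (tier-zero nodes are never token sources, and their self-entry is only bumped up by $\fillslots$ or by $\mergevectors$ pointwise maxima), while any update to $\vals_i(m)$ happens only inside $\mergevectors$, which raises it to $\max(\vals_i(m), \vals_l(m))$ for some received $\vals_l(m)$ that was bounded at send time by a past, hence smaller, value of $\vals_m(m)$. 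Summing over $\dom(\vals_i')$ closes the tier-zero case.
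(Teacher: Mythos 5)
Your overall strategy coincides with the paper's: a joint induction on the action trace for (i) and (ii), with the only substantive case being a merge with a possibly stale $C_j$, a case split that follows the structure of $\aggregate$, and appeals to Lemmas~\ref{CTV-non-decreasing} and~\ref{CTV-monotonic}. Your treatment of part (i), of the $\val_i$ and $\val_j$ summands in part (ii), and of the tier-0 branch is sound; in particular your auxiliary invariant that $\vals_i(m) \leq \vals_m(m)$ for tier-0 nodes $i,m$, combined with the monotonicity of tier-0 self entries (tier-0 nodes never create tokens), is precisely the formal content behind the paper's one-line appeal to the pointwise-maximum updating of tier-0 $\vals$ fields, and it is a welcome elaboration.

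The gap is in the same-tier composite summand $b + \vals_i(i) + \vals_j(j)$. Your ``tracking property'' claims that any value counted in $\vals_j(j)$ at send time persists in $C'$ \emph{with its source tier unchanged}, but this is contradicted by your own justification: once $\createtoken$ has moved that value into a token and $\fillslots$ has moved the token into the self entry of a strictly smaller-tier destination, the value is thereafter counted in $\CTV_{C'}(\tier_i - 1)$, not in $\CTV_{C'}(\tier_i) - \CTV_{C'}(\tier_i - 1)$, so it cannot be charged to the latter as you do. This matters because $\CTV_{C'}(\tier_i - 1)$ is exactly the budget you use to bound $b$: if, between the sending of the stale message and its receipt, node $j$ hands the increments recorded in the old $\vals_j(j)$ off to a lower tier and node $i$ meanwhile absorbs that information into $\below_i$, then $b$ and the stale $\vals_j(j)$ refer to the same increments, and adding the three terms double-counts them; your proof never establishes the disjointness that the inequality $b + \vals_i(i) + \vals_j(j) \leq \CTV_{C'}(\tier_i)$ requires. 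Note that the paper's proof dispatches this case in a single clause without your tracking claim, so you are attempting to make explicit a step the paper leaves implicit; but the claim you introduce is false as stated, and the missing disjointness argument for the stale $\vals_j(j)$ against $b$ is exactly the delicate point a complete proof of this case must settle.
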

\begin{proof}
By induction on the trace of the actions performed by the system, using
(i) and (ii) together in the induction hypothesis.
Given that $\CTV_C(k)$ is non-decreasing (by Lemma~\ref{CTV-non-decreasing}),
so are the right-hand sides of the inequalities, and the only relevant actions
are those that update either $\below_i$ or $\val_i$:
(1) An increment $\incr_i$ at node $i$, resulting in an increment of both
$\val_i$ and $\vals_i(i)$, in which case the inequality remains true. 
(2) A $\receive_{j,i}(M)$ of a message previously sent by a node $j$ to node
$i$, with the corresponding merge being applied to the state of $i$, and the
fields being updated by $\aggregate$. Regarding $\below_i$, there are three
cases: it remains unchanged, it can be possibly set to $\below_j$ if $\tier_j
= \tier_i$, or it can be possibly set to $\val_j$ if $\tier_j < \tier_i$; in
each case the induction hypothesis is preserved because $\CTV_C$ is
non-decreasing ($M$ can be any message from node $j$, arbitrarily from the
past) and in the last case also due to the monotonicity of $\CTV_C(k)$ over
$k$ (by Lemma~\ref{CTV-monotonic}).
Regarding $\val_i$, either it is set to the sum of the $\vals_i$ values, if
$\tier_i = 0$, which does not exceed $\CTV_C(0)$ due to the pointwise maximum
updating of $\vals$ fields for tier 0 nodes; or it either remains unchanged,
is set to $\val_j$ only if $\tier_i = \tier_j$, or is set to the sum of the
values (computed for the next configuration) of $\below_i$ with $\vals_i(i)$
and also $\vals_j(j)$ when $\tier_i = \tier_j$; in each case the induction
hypothesis is preserved.
\end{proof}

\begin{proposition}
\label{prop-global-increments}
The number of increments globally issued up to any time $t$, say $I^t$, is
equal to the sum of the values held in the set of enabled tokens and of the
self entries in the $\vals$ field of all nodes; i.e., for a network having
maximum tier $T$, given a configuration $C^t$ at time $t$, we have
$I^t = \CTV_{C^t}(T)$.
\end{proposition}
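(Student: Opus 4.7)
The plan is to prove the claim by induction on the global event trace. The base case is immediate: at time $0$ no increments have been issued, every $\vals_i(i) = 0$, and no tokens exist, so $I^0 = 0 = \CTV_{C^0}(T)$. In the inductive step I would consider the three action types $\incr_i$, $\send_{i,j}$, and $\receive_{j,i}$. Sends leave both sides unchanged. An $\incr_i$ adds one to $I^t$ and one to $\vals_i(i)$, and since $\tier_i \leq T$ this contributes to $\CTV_{C^t}(T)$, preserving the equation. The substantive work is the receive action, whose effect is the eight-stage composition $\merge$; I would prove, transformation by transformation, that $\CTV(T)$ is invariant.

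Most of the eight cases are routine. $\aggregate$ modifies only $\val$ and $\below$. $\mergevectors$ fires only for tier $0$ nodes, which never execute $\createtoken$, so $\vals_i(i)$ is monotonic at node $i$ and every past $\vals_j(i)$ transmitted from a peer is bounded by the current $\vals_i(i)$, making the pointwise maximum a no-op on the self entry. $\cachetokens$ only copies tokens into $\tokens_i$, and because $E_C$ is treated as a set, duplicates do not count twice; where the conflict resolver overwrites an older $(\sck,\dck)$ pair by a newer one, the older token cannot be enabled (by Proposition~\ref{prop-token-created} and Lemma~\ref{lem-slot-once}, its destination slot must already have been replaced). $\createslot$ introduces a slot with a freshly incremented $\dck_i$, whose quadruple is brand new by Lemma~\ref{lem-slot-once}, so no pre-existing token is suddenly enabled. $\fillslots$ is the bookkeeping heart: each matched token is moved from enabled status into $\vals_i(i)$ by the same amount $n$, which is sum-preserving. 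Finally, $\createtoken$ moves $\vals_i(i)$ into a new token; one has to check that this new token is actually enabled, which follows because the guard $\fst(\slots_j(i)) = \sck_i$ together with the monotonicity of $\sck_i$ means no $\createtoken$ has fired at $i$ since the slot was installed at $j$, and therefore neither $\fillslots$ nor $\discardslot$ at $j$ can have removed it.

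The critical and hardest steps are $\discardslot$ and $\discardtokens$, each of which removes state that could a priori correspond to an enabled token. My plan is to show that their guards are incompatible with the existence of any such enabled token. For $\discardslot(C_i, C_j)$: the guard $\sck_j > \fst(\slots_i(j)) = sck_t$ implies that $j$ had executed $\createtoken$ when its source clock jumped past $sck_t$, producing a unique token $T$ with source clock $sck_t$; if the slot $(j, i, sck_t, dck)$ still exists at $i$, then $T$ is enabled, and by Proposition~\ref{prop-token-created}(i) $T$ is still held in $\tokens_j$ of the received message, so the preceding $\fillslots$ would have acquired $T$ and removed the slot, contradicting that $\discardslot$ fires. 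For $\discardtokens(C_i, C_j)$ targeting a token with id $(src, j, *, dck)$: each of the two guards --- $\snd(\slots_j(src)) > dck$, or, in the no-slot case, $\dck_j > dck$ --- combined with monotonicity of $\dck_j$ and Lemma~\ref{lem-slot-once}, forces the current slot for $src$ at $j$ (if any) to carry a $\dck$ strictly larger than the token's, so the token cannot be enabled; since $E_C$ is a set, removing a cached duplicate leaves $\CTV$ unchanged.

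The main obstacle is exactly this last case analysis: arguing that neither $\discardslot$ nor $\discardtokens$ silently violates the invariant that whenever an enabled token of a given id exists globally, the corresponding slot is present at its destination. The key technical ingredients are Proposition~\ref{prop-token-created} (tokens persist until acquired, and their slots persist throughout that interval), Lemma~\ref{lem-slot-once} (slot identifiers are unique, so a discarded slot cannot reappear to revive an old token), and the monotonicity of $\sck$ and $\dck$. Once combined, these imply that each of the eight transformations preserves $\CTV_{C^t}(T)$, closing the induction and yielding $I^t = \CTV_{C^t}(T)$.
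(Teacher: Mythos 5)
Your proposal is correct and takes essentially the same route as the paper's proof: induction on the action trace, with the $\receive$/merge step analysed case by case, using Lemma~\ref{lem-slot-once}, Proposition~\ref{prop-token-created} and the monotonicity of $\sck$/$\dck$ to show that filling, discarding and creating tokens or slots all leave $\CTV$ unchanged. The only difference is level of detail: where you re-derive clock-based arguments for $\discardslot$, $\discardtokens$ and the enabledness of a freshly created token, the paper simply invokes Proposition~\ref{prop-token-created} (slots persist until their token is acquired; tokens are removed only after acquisition), and it treats $\createslot$ and $\cachetokens$ more tersely than you do.
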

\begin{proof}
By induction on the trace of the actions performed by the system. The relevant
actions are an increment at some node $i$, which results in an increment of
the $i$ component of the $\vals$ field of node $i$; or a
$\receive_{j,i}(C_j)$ of a message previously sent by a node $j$ to a node
$i$, with the corresponding merge being applied to the state of $i$, leading
possibly to:
the filling of one or more slots, each slot $S$ corresponding to a token $(S,
n)$, which adds $n$ to $\vals_i(i)$ and removes slot $S$ from $i$, which makes
the token no longer enabled, leaving the sum unchanged;
discarding a slot, which cannot, however, correspond to an enabled token, as
from Proposition~\ref{prop-token-created} a slot will exist until the
corresponding token is acquired;
merging $\vals$ pointwise for two tier 0 nodes, which does not change
the self component $vals_i(i)$ of any node $i$;
discarding tokens, which cannot be enabled because, by
Proposition~\ref{prop-token-created}, tokens are only removed from any node
after being acquired;
the creation of an enabled token $(\_, n)$, at node $i$, holding
the value $n = \vals_i(i)$ and resetting $\vals_i(i)$ to 0, leaving the sum unchanged;
caching an existing token, which does not change the set of tokens in the
system.
\end{proof}

\begin{proposition} \label{prop-ECDC-bi}
Any execution of Handoff Counters ensures ECDC fetch bounded by increments.
\end{proposition}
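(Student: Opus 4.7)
The plan is to chain together the three results just established: Proposition~\ref{prop-below-val}(ii), Lemma~\ref{CTV-monotonic}, and Proposition~\ref{prop-global-increments}. Concretely, $\fetch_i^t = \val_i$ by the definition of the $\fetch$ operation, so I need to show $\val_i \leq \setsize{\{\incr_\_^{t'} \mid t' < t\}} = I^t$ in the configuration $C^t$ prevailing at time $t$.

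First I would apply Proposition~\ref{prop-below-val}(ii) to the replica $C_i$ to get $\val_i \leq \CTV_{C^t}(\tier_i)$. Next, letting $T$ denote the maximum tier in the network, Lemma~\ref{CTV-monotonic} gives $\CTV_{C^t}(\tier_i) \leq \CTV_{C^t}(T)$ since $\tier_i \leq T$. Finally, Proposition~\ref{prop-global-increments} identifies $\CTV_{C^t}(T)$ with $I^t$, the total number of increments issued up to time $t$. Composing these three inequalities yields $\fetch_i^t = \val_i \leq I^t$, which is precisely the fetch-bounded-by-increments condition.

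There is essentially no obstacle here: the hard inductive work was done in Propositions~\ref{prop-below-val} and~\ref{prop-global-increments}, which together ensure that $\val_i$ can never get ahead of the cumulative tier value and that the cumulative tier value at the top tier is exactly the global increment count. The only minor point worth stating explicitly is the use of $\tier_i \leq T$ and the fact that the inequalities are evaluated at the same configuration $C^t$; beyond that, the proof is a three-line chain and the write-up should be just that.
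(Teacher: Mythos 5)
Your proposal is correct and follows essentially the same route as the paper's own proof: fetch returns $\val_i$, which by Proposition~\ref{prop-below-val}(ii) is bounded by $\CTV_{C^t}(\tier_i)$, then by Lemma~\ref{CTV-monotonic} by $\CTV_{C^t}(T)$, which Proposition~\ref{prop-global-increments} identifies with the global increment count. No gaps; the write-up can indeed be the three-line chain you describe.
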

\begin{proof}
In any configuration $C$, a $\fetch_i$ at replica $C_i$ simply returns $\val_i$.
From Proposition~\ref{prop-below-val}, this value does not exceed
$\CTV_C(\tier_i)$, which, from the monotonicity of $\CTV_C(k)$
(Lemma~\ref{CTV-monotonic}) and Proposition~\ref{prop-global-increments}, does
not exceed the number of globally issued increments.
\end{proof}

\begin{proposition} \label{prop-ECDC-lm}
Any execution of Handoff Counters ensures ECDC local monotonicity.
\end{proposition}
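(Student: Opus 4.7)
The plan is to reduce the claim directly to Lemma~\ref{below-val-non-decreasing} plus the definition of $\incr$. Since $\fetch_i^t$ returns exactly $\val_i$ at time $t$, the inequality $\fetch_i^{t_2} - \fetch_i^{t_1} \geq \setsize{\{\incr_i^{t'} \mid t_1 < t' < t_2\}}$ is equivalent to showing that $\val_i$ grows by at least one for each local $\incr_i$ in the interval, and never shrinks in between.

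First I would argue by induction on the sequence of atomic actions occurring in the trace strictly between $t_1$ and $t_2$, maintaining the invariant that $\val_i^t - \val_i^{t_1} \geq \setsize{\{\incr_i^{t'} \mid t_1 < t' \leq t\}}$. The base case (no actions past $t_1$) is immediate. For the inductive step, the only action that modifies $\val_i$ and contributes to the right-hand side is an $\incr_i$ action, which by the definition of $\incr$ in Figure~\ref{HC-operations} replaces $\val_i$ with $\val_i + 1$, so both sides grow by exactly one and the invariant is preserved. Any other action leaves the right-hand side unchanged.

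The remaining actions fall into two classes: actions taking place at nodes other than $i$ (which cannot touch $C_i$ at all, hence leave $\val_i$ unchanged), and $\receive_{j,i}$ actions at node $i$, which trigger $\merge(C_i, C_j)$. For the latter, Lemma~\ref{below-val-non-decreasing} establishes that $\val_i$ is non-decreasing across any merge, so the invariant is preserved with slack.

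The only point requiring care, and the thing I would check explicitly, is that the $\incr$ definition really does add exactly one to $\val_i$ without depending on anything else, so that no merge interleaving could mask the increment. Inspecting Figure~\ref{HC-operations}, $\incr$ unconditionally updates $\val$ to $\val_i + 1$, so each $\incr_i$ in the interval contributes a strict $+1$, and since merges (the only other way $\val_i$ changes) cannot cancel that contribution by Lemma~\ref{below-val-non-decreasing}, the bound follows. There is no genuine obstacle here; the lemma has already done the hard work by establishing monotonicity of $\val$, and local monotonicity of fetches is essentially a corollary.
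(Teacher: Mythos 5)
Your proposal is correct and takes essentially the same route as the paper: both arguments rest on the facts that $\fetch_i$ returns $\val_i$, that $\val_i$ is non-decreasing across all other actions by Lemma~\ref{below-val-non-decreasing}, and that each local $\incr_i$ adds exactly one to $\val_i$. Your explicit induction over the trace between $t_1$ and $t_2$ is just a more detailed spelling-out of the paper's one-line argument, not a different approach.
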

\begin{proof}
Operation $\fetch_i$ simply returns $\val_i$, which is non-decreasing
(Lemma~\ref{below-val-non-decreasing}) and which is always incremented
upon a local $incr_i$; therefore, for any node $i$ the difference between
$\fetch_i$ at two points in time will be at least the number of increments
issued at node $i$ in that time interval.
\end{proof}

\begin{proposition} \label{prop-ECDC-ea}
Any execution of Handoff Counters ensures ECDC eventual accounting.
\end{proposition}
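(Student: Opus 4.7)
The plan is to decompose the argument into a ``drain'' phase and a ``propagate'' phase. Fix $t$ and let $I^t = \setsize{\{\incr_\_^{t''} | t'' < t\}}$. By Proposition~\ref{prop-global-increments}, $I^t = \CTV_{C^t}(T)$ where $T$ is the maximum tier in the network. I would first show that the value accounted by time $t$ eventually drains down to tier $0$, so that $\CTV(0) \geq I^t$ at some later time; then I would show that this tier $0$ total propagates, via gossip and $\aggregate$, to every node's $\val$ field, yielding $\fetch_j \geq I^t$ simultaneously at every $j$.

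For the drain phase, I proceed by reverse induction on $k$ from $T$ down to $0$, showing that there exists $t_k \geq t$ with $\CTV_{C^{t_k}}(k) \geq I^t$. The base case $k = T$ holds at $t_T = t$ by Lemma~\ref{CTV-non-decreasing}. For the step, suppose $\CTV(k+1) \geq I^t$ from $t_{k+1}$. The gap $\CTV_{C^{t_{k+1}}}(k+1) - \CTV_{C^{t_{k+1}}}(k)$ is the sum, over tier-$(k+1)$ nodes $i$, of $\vals_i(i)$ plus the values of enabled tokens created by $i$. Lemma~\ref{token-eventually-acquired} gives eventual acquisition of those enabled tokens, each such acquisition by a tier-$\leq k$ destination moving its value into $\CTV(k)$. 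For the $\vals_i(i)$ contribution of a tier-$(k+1)$ node $i$: topology assumption~3 gives $i$ a neighbor $j$ with $\tier_j \leq k$; by the system model infinitely many messages in both directions are delivered, so a full 4-step handshake eventually completes. Since $\vals_i(i)$ can only decrease via $\createtoken$, its value at the moment of the first $\createtoken$ after $t_{k+1}$ is at least its value at $t_{k+1}$; that value is captured in the token and, upon acquisition by $j$, is added to $\CTV(k)$. Assuming finitely many tier-$(k+1)$ nodes, taking the maximum of the finite completion times over them gives $t_k$ at which the entire gap has moved, so $\CTV(k) \geq \CTV_{C^{t_{k+1}}}(k+1) \geq I^t$.

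For the propagation phase, observe that tier $0$ nodes never create tokens (this would require a slot at a strictly smaller tier node), so $\CTV_{C^{t_0}}(0) = \sum_{\tier_\ell = 0} \vals_\ell(\ell) \geq I^t$. By topology assumption~2 the tier $0$ subgraph is connected, and the system model ensures eventual propagation of each $\vals_\ell(\ell)$ via $\mergevectors$ through the tier $0$ gossip; after the next merge at any tier $0$ node $i$, $\aggregate$ sets $\val_i = \sum \vals_i \geq \sum_\ell \vals_\ell(\ell) \geq I^t$. Then by forward induction on tier, a tier-$k$ node $i$ with $k > 0$ has by topology~3 a tier-$<k$ neighbor $j$ that has already reached $\val_j \geq I^t$, and any subsequent delivery from $j$ sets $\below_i \geq \val_j \geq I^t$ and $\val_i \geq \below_i + \vals_i(i) \geq I^t$ via $\aggregate$. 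Since $\val$ is non-decreasing by Lemma~\ref{below-val-non-decreasing}, taking $t'$ as the maximum of these finitely many per-node times yields $\fetch_j^{t'} = \val_j^{t'} \geq I^t$ for every $j$.

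The main obstacle is the inductive step of the drain phase: one must argue rigorously that, despite concurrent new increments at tier $k+1$, the value already present at time $t_{k+1}$ does not remain indefinitely stuck while handoffs only carry away subsequently added value. The key observation is that $\vals_i(i)$ evolves monotonically between $\createtoken$ events (only $\createtoken$ decreases it, and then to zero), so the first successful handoff from $i$ after $t_{k+1}$ necessarily moves the $t_{k+1}$ value along with whatever was added in the interim.
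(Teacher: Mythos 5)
Your proof takes essentially the same route as the paper's: use Proposition~\ref{prop-global-increments} and the monotonicity of $\CTV$ to drain the value accounted by time $t$ down the tiers via eventual token acquisition (Lemma~\ref{token-eventually-acquired}), then rely on tier-0 connectivity and $\mergevectors$ gossip, and finally push the tier-0 total upward through $\aggregate$ along strictly descending-tier paths, using the non-decreasing $\val$ fields to obtain a single time $t'$ for all nodes. If anything, your drain step is more explicit than the paper's, which only invokes acquisition of already-enabled tokens and leaves implicit the point you spell out --- that the self $\vals_i(i)$ entries of higher-tier nodes are captured by the first token created after $t_{k+1}$ (since $\vals_i(i)$ only decreases at $\createtoken$) and thus also descend to a smaller tier.
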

\begin{proof}
Let $T$ be the maximum node tier in the network, and $N$ the set of nodes.
From Proposition~\ref{prop-global-increments}, the number of increments $I^t$
globally issued up to any time $t$,  for a configuration $C^t$, is equal to
$\CTV_{C^t}(T)$.
From Lemma~\ref{token-eventually-acquired}, by some later time $t' > t$,
all tokens from tier $T$ enabled at time $t$ will have been acquired by
smaller tier nodes (if $T > 0$ ), and also because $\mathsf{CTV}$ is non-decreasing, it
follows that
$I^t \leq \CTV_{C^{t'}}(T')$, for some $T' < T$. Repeating this reasoning
along a finite chain $T > T' > \cdots > 0$, by some later time $t''$ we have
$I^t \leq \CTV_{C^{t''}}(0) = \sum[\vals_i^{t''}(i) | C_i^{t''} \in C^{t''} | \tier_i = 0]$, this last
equality holds because there are no tokens created at tier 0; given the network
topology assumptions of tier 0 connectedness, eventually at some later
time $t'''$, all $\vals$ entries in all tier 0 nodes will be pointwise greater
than the corresponding self entry at time $t''$, i.e., $\vals_i^{t'''}(j) \geq
\vals_j^{t''}(j)$ for all tier 0 nodes $i$ and $j$.
Given the topology assumptions of the existence, for each node $i$, of a path
along a strictly descending chain of tiers $\tier_i > \cdots > 0$, eventually,
by the $\aggregate$ that is performed when merging a received counter,
repeated along the reverse of this path, at some later time the $\val_i$ field
for each node $i$ in the network will have a value not less than the sum
above, and therefore, not less than the number of increments $I^t$
globally issued up to time $t$, which will be returned in the $\fetch_i$
operation.
\end{proof}

\begin{theorem}
Handoff Counters implement Eventually Consistent Distributed Counters.
\end{theorem}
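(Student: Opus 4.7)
The plan is to observe that this theorem is an immediate corollary of the three propositions that were just established. Recall from Section~\ref{ec_counters} that an eventually consistent distributed counter is defined by exactly three correctness conditions: \emph{fetch bounded by increments}, \emph{local monotonicity}, and \emph{eventual accounting}. Propositions~\ref{prop-ECDC-bi}, \ref{prop-ECDC-lm}, and \ref{prop-ECDC-ea} establish each of these conditions, respectively, for any execution of Handoff Counters, using the Algorithm~\ref{alg:dist_algo} presentation and the data type operations in Figures~\ref{HC-operations} and~\ref{HC-transformations}. Therefore, the only step needed is to invoke the three propositions in turn and conclude that the ECDC specification is met.

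I would write the proof essentially as a one-liner: fix an arbitrary execution; by Proposition~\ref{prop-ECDC-bi} it satisfies fetch bounded by increments, by Proposition~\ref{prop-ECDC-lm} it satisfies local monotonicity, and by Proposition~\ref{prop-ECDC-ea} it satisfies eventual accounting; since these are precisely the three defining conditions of ECDC, Handoff Counters implement ECDC. There is no real obstacle at this stage, since the substantive work has already been done in the preceding lemmas and propositions (the safety arguments built on Lemma~\ref{lem-slot-once} through Proposition~\ref{prop-below-val}, and the liveness argument built on Lemma~\ref{token-eventually-acquired} together with the topology assumptions from Section~\ref{network-topology}); the theorem merely packages them into the statement originally promised.
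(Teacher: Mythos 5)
Your proposal matches the paper's proof exactly: the theorem is proved by combining Propositions~\ref{prop-ECDC-bi}, \ref{prop-ECDC-lm}, and \ref{prop-ECDC-ea}, which cover the three defining ECDC conditions. Nothing is missing.
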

\begin{proof}
Combine Propositions \ref{prop-ECDC-bi}, \ref{prop-ECDC-lm}, and
\ref{prop-ECDC-ea}.
\end{proof}

%

\section{Practical Considerations and Enhancements}
\label{practical_considerations}

Handoff Counters were presented as a general CRDT, that works under a simple
gossip control algorithm ``send counter to all neighbors and merge received
counters''. Here we discuss practical issues and outline some enhancements such
as more selective communication, how to amortize the cost of durable writes while
ensuring correctness, how to avoid sending the full CRDT state for busy servers
with many concurrent clients, and the issue of client retirement. A formal
treatment of these issues is deferred to further work. 

\subsection{Topology and Message Exchanges}

We have described a mechanism which is arbitrarily scalable, through the use
of any suitable number of tiers. In Example~\ref{ex:datacenter} we have
described a 3 tier scenario (tiers 0 for permanent nodes, tier 1 for serving
nodes, appropriate to the number of end-clients, and tier 2 for end-clients).

In practice, the most common deployment will probably consist of only tier 0
and 1 nodes (tier 0 for the data-store and tier 1 for the end-clients). This
is because a high scale scenario typically involves not only many clients, but
also many counters, with requests spread over those many counters. Having a
couple of tier 0 nodes per data-center per counter will cover most common
usages.

For presentation purposes, the distributed algorithm consisted simply of a
general gossip, where each node keeps sending its counter CRDT to each
neighbor. In practice, in the role of client, a node will simply choose one
smaller tier neighbor as server, to use in the message exchange, to maximize the
effectiveness of the handoff. Not only this avoids extra token caching by
other nodes and subsequent extra work in removing them after they have been
acquired, but also avoids the possible creation of slots that will have no
chance of being filled and will have to be discarded.
Only when a client suspects that the chosen server is down or there is a
network partition should another node be chosen as server, to continue the
exchange.

\subsection{Fault Tolerance}

The mechanism correctness assumes durable storage, and that the CRDT resulting
from each operation is successfully stored durably. We leave it as orthogonal
to our mechanism the way each node achieves local durability and local node
fault tolerance (e.g., through the use of storage redundancy, like using a
RAID, or by running a consensus protocol over a small number of tightly
connected machines emulating a single node). 

A practical issue that arises in each replica is that, to avoid unbearable
loss of performance, the actual write to durable storage (e.g., using POSIX's
fsync) should not be made after each operation over the CRDT. But if the write
to durable storage is delayed and messages continue to be exchanged, a node
crash will violate the correctness assumptions, as the local in-memory CRDT
value which could already have been received and merged by other nodes will be
lost.

To overcome this problem, a maximum frequency of durable writes can be defined.
Between durable writes, all CRDTs received from other nodes can be merged to
the transient in-memory CRDT, but no ``replies'' are sent; instead, their node
ids are collected in a transient set. After the durable write, messages
containing the written CRDT are sent to those nodes in the collected set of
ids, while new messages received are applied to the transient in-memory CRDT
and node ids are again collected into a new set, until the next durable write.

This means that all messages sent correspond to a durably stored CRDT; if the
node crashes the transient state (CRDT and set of node ids) is lost,
but this is equivalent to the messages received since the last durable write
having been lost. As the mechanism supports arbitrary message loss,
correctness will not be compromised.
This solution amortizes the cost of a durable write over many received requests,
essential for a heavily loaded server. Under little load, durable writes can
be made immediately and a reply message sent.

The maximum frequency of writes can be tuned (e.g., using some value between
100 and 1000 times per second) according to both storage device
characteristics and network latency. As an example, if clients of a given node
are spread geographically and there is considerable latency (e.g., 50ms),
waiting some time (e.g., 5 ms) to collect and merge messages from several
clients before writing to durable storage and replying should not cause a
noticeable impact.

\subsection{Restricting Transmitted State through Views}

\auxfun{view}

The algorithm as described adopts the standard CRDT philosophy, in which the
full CRDT state is sent in a message to be merged at the destination node.
For Handoff Counters we can explore the way merge works, to avoid sending the
full CRDT state, and instead only sending the state which is relevant for the
destination node. This strategy assumes that messages are sent to specific
nodes (as opposed to, e.g., being broadcast) and that the sender knows the
node id and tier of the message destination.  It also assumes that server
nodes that a given client uses for handoff are all of the same tier.  This
assumption is reasonable, and met by the examples discussed, where clients of
tier $n+1$ handoff to nodes of tier $n$.

The insight is that when a node $i$ is sending $C_i$ to a greater tier node
$j$, in what regards the $\slots$ field, only the entry for node $j$
is relevant when the merge is applied at $j$; the other entries are ignored
when $\merge(C_j, C_i)$ is performed at $j$, and can be omitted from the CRDT
to be sent. When $i$ is sending to a smaller tier node $j$, no slot from $C_i$
is relevant for the merge at $j$ and so, no slots need to be sent. Only when
communicating with a node $j$ of the same tier must the full $\slots_i$ map be
sent, as $j$ may be caching tokens from some greater tier client, with $i$ as
destination.

Using the insight above, instead of doing a $\send_{i,j}(C_i)$, node $i$ can
make use of a function $\view$ to restrict the state to the information
relevant to node $j$, and do a $\send_{i,j}(\view(C_i, j))$. This function can
be defined as:

\begin{eqnarray*}
\view(C_i, j) &\defeq&
  \kw{if} \tier_i < \tier_j \kw{then}
      C_i\{\slots = \{(k,s) \in \slots_i | k = j \} \} \\
&& \kw{else}\ \kw{if} \tier_i > \tier_j \kw{then}
      C_i\{\slots = \{\} \} \\
&& \kw{else} C_i.
\end{eqnarray*}

Even though this only involves the $\slots$ field, this component will
constitute the largest part of the counter state in a busy server with many
concurrent clients, as it can have one slot per client. This optimization will
allow sending only a small message to each client, and also avoid sending
slots to smaller tier nodes (e.g., when a tier 1 node communicates with tier
0).

\subsection{Client Retirement}

Given that each node accounts locally issued increments until they are handed
off, when an end-client has stopped issuing increments and wants to retire,
it should continue exchanging messages until it is certain that those
increments will be accounted elsewhere (in smaller tier nodes).

The normal way of doing so is to keep exchanging messages with the chosen
server, until the $\vals$ self component is zero and the $\tokens$ map is
empty. This can, however, mean waiting until a partition heals, if there is a
token for a partitioned server.  The token caching mechanism allows the client
to start a message exchange with an alternate server, which will cache the
token, to be delivered later to the destination.

While an end-client $i$ wishes to remain active, even if some node $k$ has
already cached a token from $i$ to server $j$, client $i$ cannot discard the
token unless it communicates with $j$ after $j$ has acquired it; otherwise, it
could cause an incorrect slot discarding at $j$.
But in the retirement scenario, if $\vals_i(i)$ self component is zero, and $i$
has learned that all its tokens are already cached at other nodes (by having
seen them in messages received from those nodes), $i$ can stop sending
messages and retire definitely. As no more messages are sent, no incorrect
slot removal will occur, and as all tokens from $i$ are cached elsewhere they
will be eventually acquired, implying a correct eventual accounting of all
increments issued at $i$.


Another issue regarding client retirement is slot garbage collection. The
mechanism was designed to always ensure correctness, and to allow temporary
entries (slots and tokens) to be removed in typical runs. As such, slots must
be kept until there is no possibility of them being filled. The mechanism was
designed so that a server can remain partitioned an arbitrary amount of time
after which a message arrives containing a token. This raises the possibility
that: a client $C$ sends a message to a server $S_1$, a slot is created at
$S_1$, a partition occurs just before a corresponding token is created at $C$,
the client starts exchanging messages to another server $S_2$ and successfully
hands off the local value to $S_2$ and retires; in this scenario, the slot at
$S_1$ will never be garbage collected, as $C$ is no longer alive to
communicate with $S_1$. (Under our system model $C$ is not expected to retire
for ever, and all partitions eventually heal, but dealing with client
retirement is a relevant practical extension.)

In this example, even though correctness was not compromised, each such
occurrence will lead to irreversible state increase which, even if
incomparable in magnitude to the scenario of naive CRDTs with client ids
always polluting the state, is nevertheless undesirable. This motivates a
complementary mechanism to improve slot garbage collection: if a client starts
using more than one server, it keeps the set of server ids used; when it
wishes to retire the intention is communicated to the server, together with
the set of server ids, until the retirement is complete; a server which
receives such intention keeps a copy of the last token by that client (in a
separate data-structure, independently of whether the server caches or acquires
the token), and starts an algorithm which disseminates the token to the set of
servers used by the client and removes it after all have acknowledged the
receipt. The insight is that when one of these servers sees the token, it can
remove any slot for that client with an older source clock. For this, it is
essential that this information is piggy-backed in the normal messages between
servers carrying the CRDT, and processed after the normal merge, so that a
server that has a slot corresponding to an enabled token for that client, that
may be cached in another server will see the token and fill the corresponding
slot, before attempting slot garbage collection by this complementary
mechanism.

\section{Beyond Counters}
\label{beyond-counters}

We have up to now addressed distributed counters, given their wide
applicability and importance. Using counters was also useful for presentation
purposes, as something concrete and widely known. The resulting mechanism and
lessons learned are, however, applicable far beyond simple counters.

What we have devised is a mechanism which allows some value to be handed off
reliably over unreliable networks, through multiple paths to allow
availability in the face of temporary node failures or network partitions.
Values are moved from one place to another by ``zeroing''  the origin and
later ``adding'' to the destination. Reporting is made by aggregating in two
dimensions: ``adding'' values and taking the ``maximum'' of values.
The value accounted at each node is updated by a commutative and associative
operation which ``inflates'' the value.
This prompts a generalization from simple counters over non-negative integers
to more general domains.

\newcommand\zero{\ensuremath{\mathbf{0}}}

The handoff counter CRDT can be generalized to any commutative monoid $M$ (an
algebraic structure with an associative and commutative binary operation
($\oplus$) and an identity element ($\zero$)) which is also a join-semilattice
(a set with a partial order ($\pleq$) for which there is a least upper bound
($x \join y$) for any two elements $x$ and $y$) with a least element ($\bot$), as long as it
also satisfies:

\begin{eqnarray*}
\bot &=& \zero \\
x \join y &\pleq& x \oplus y
\end{eqnarray*}

The CRDT state and definition of merge remain unchanged, except:
\begin{itemize}
\item
Fields $\val$, $\below$, range of $\vals$ entries and token payload now store
elements of $M$ instead of simple integers;
\item
Those fields are initialized to $\zero$ in the initialization of the CRDT;
$\zero$ is also used for resetting the self $\vals_i(i)$ entry in
$\createtoken$;
\item
The sum operation ($+$) over elements of the above fields is replaced by the
$\oplus$ operation;
\item
The $\max$ operation used in $\mergevectors$ and $\aggregate$ is replaced by
the $\join$ operation;
\end{itemize}

In terms of client-visible mutation operations, instead of $\incr$, any set of
operations that are associative and commutative and that can be described as
inflations over elements of $M$ (i.e., such that $x \pleq f(x)$) can be made
available.

In terms of reporting operations, instead of $\fetch$, the data type can make
available functions that can be defined over elements of $M$ (that result
from the aggregation made resorting to $\oplus$ and $\join$).

\subsection{Example: Map of Counters}

Sometimes more than a single counter is needed. Instead of having a group of
Handoff Counter CRDTs, a new CRDT can be devised, that holds a group of
counters together, made available as a map from counter id to value.
This will allow amortizing the cost of the CRDT state over the group of
counters, instead of having per-counter overhead.

The CRDT for the map-of-counters can then be defined by making elements of $M$
be maps from ids to integers and defining:
\begin{eqnarray*}
\zero &\defeq& \{\} \\
x \oplus y &\defeq& \union^+(x,y) \\
x \join y &\defeq& \union^{\max}(x,y)
\end{eqnarray*}

and by making available:

\begin{eqnarray*}
\fetch(C_i, c) &\defeq& \val_i(c) \\
\incr(C_i, c) &\defeq& C_i\{\val = \union^+(\val_i, \{c \mapsto 1\}),
                            \vals = \vals_i\{i \mapsto \union^+(\vals_i(i), \{c \mapsto 1\})\}\}
\end{eqnarray*}

\subsection{Example: PN-Counter}

A PN-Counter~\cite{rep:syn:sh138} can be both incremented and decremented. It
can be implemented as a special case of the previous example, with two entries
in the map: the $\af{p}$ entry, counting increments, and the $\af{n}$ entry,
counting decrements. The fetch operation returns the difference between these
values:

\auxfun{decr}

\begin{eqnarray*}
\fetch(C_i) &\defeq& \fetch(C_i, \af{p}) - \fetch(C_i, \af{n}) \\
\incr(C_i) &\defeq& \incr(C_i, \af{p}) \\
\decr(C_i) &\defeq& \incr(C_i, \af{n})
\end{eqnarray*}

\section{Discussion}

The standard approach to achieving reliability and availability in distributed
systems is to use a replicated service and distributed transactions, with a
fault tolerant distributed commit protocol, that works if some majority of
nodes are working (and not partitioned), e.g., Paxos
Commit~\cite{GrayLamport2006}.
This standard approach attacks several problems in the same framework:
network failures, temporary node failures and permanent node failures.
By doing so, it incurs a performance cost, due to the need to communicate
with several nodes, even when no failures occur.
Our approach does not impose such cost: when no failures occur, a node playing
the role of client only communicates with just one server.

Regarding availability, our approach (even after assuring that increments were
handed off to some server, so that a client can retire) is also better, as in
case of server crash or link failure, it is enough that a single alternative
server is available and reachable, as opposed to a majority of servers.

A significant characteristic of our approach is that it focuses on addressing
network failures and temporary node failures, while not addressing permanent
node failures, leaving them as an orthogonal issue to be attacked
locally, e.g., through storage redundancy at each node.
By not conflating temporary and permanent node failures, our approach does not
impose on the distributed execution the cost of tolerating the latter.

Our approach can be seen to fit in the general philosophy described in
\cite{DBLP:conf/cidr/Helland07} when aiming for ``almost-infinite scaling'':
in avoiding large-scale distributed transactions and only assuming atomic
updates to local durable state; in not requiring exactly-once messaging and
having to cope with duplicates; in using uniquely identified entities; in
remembering messages as state; in having entities manage ``per-partner state''.
Our approach can be seen as applying that philosophy in designing a scalable
distributed data type.

But the CRDT approach that we adopt goes further: since messages are unified
with state, which evolves monotonically with time, the required message
guarantees are even weaker than the at-least-once as assumed in the paper
above.
Messages with what has become an old version of the state need not be
re-transmitted, as the current state includes all relevant information,
subsuming the old state, so it suffices to keep transmitting the current state
to enable progress (assuming that some messages will eventually get through).

\section{Conclusion}
\label{conclusion}

We have addressed the problem of achieving very large scale counters over
unreliable networks. In such scenarios providing strong consistency criteria
precludes availability in general and, even if there are no network
partitions, will impact performance. We have, therefore, defined what we
called \emph{ECDC -- Eventually Consistent Distributed Counters}, that provide
the essence of counting (not losing increments or over-counting), while
giving up the sequencer aspect of the stronger classic distributed counters.

While ECDC can be naively implemented using the CRDT approach, such
implementation is not scalable, as it suffers from what is being perceived to
be the major problem with CRDT approaches: the state pollution with node
id related entries. This pollution involves not only current concurrent
nodes, but also all already retired nodes.

We have presented a solution to ECDC, called \emph{Handoff Counters}, that
adopts the CRDT philosophy, making the ``protocol'' state be a part of the
CRDT state. This allows a clear distinction between what is the durable state
to be preserved after a crash, and what are temporary variables used in the
computation. It also allows a correction assessment to focus on CRDT state and
the merge operation, while allowing a simple distributed gossip algorithm to
be used over an unreliable network (with arbitrary message loss, reordering or
duplication).

Contrary to a naive CRDT based ECDC, our solution achieves scalability in two
ways. First, node id related entries have a local nature, and are not
propagated to the whole distributed system: we can have many thousands of
participating nodes and only a few level 0 entries.  Second, even not
guaranteeing it in the general case, it allows garbage collection of entries
for nodes that participate in the computation and then retire, in normal runs,
while assuring correctness in arbitrary communication patterns. (We have also
sketched an enhancement towards improving garbage collection upon retirement,
which we leave for future work.) These two aspects make our approach usable
for large scale scenarios, contrary to naive CRDT based counters using
client-based ids, and avoiding the availability or reliability problems when
using server-based CRDTs and remote invocation.

Moreover, our approach to overcoming the id explosion problem in CRDTs is not
restricted to counters. As we have discussed, it is more generally applicable
to other data types involving associative and commutative operations.

\bibliographystyle{plain}
\bibliography{counters}

\begin{thebibliography}{10}

\bibitem{AspnesHerlihyShavit1994}
James Aspnes, Maurice Herlihy, and Nir Shavit.
\newblock Counting networks.
\newblock {\em J. ACM}, 41(5):1020--1048, September 1994.

\bibitem{AttiyaDolevWelch1995}
H.~Attiya, S.~Dolev, and J.L. Welch.
\newblock Connection management without retaining information.
\newblock In {\em System Sciences, 1995. Proceedings of the Twenty-Eighth
  Hawaii International Conference on}, volume~2, pages 622 --631 vol.2, jan
  1995.

\bibitem{AttiyaRappoport1994}
Hagit Attiya and Rinat Rappoport.
\newblock The level of handshake required for establishing a connection.
\newblock In Gerard Tel and Paul Vitányi, editors, {\em Distributed
  Algorithms}, volume 857 of {\em Lecture Notes in Computer Science}, pages
  179--193. Springer Berlin Heidelberg, 1994.

\bibitem{Brewer2000}
Eric~A. Brewer.
\newblock Towards robust distributed systems (abstract).
\newblock In {\em Proceedings of the nineteenth annual ACM symposium on
  Principles of distributed computing}, PODC '00, pages 7--, New York, NY, USA,
  2000. ACM.

\bibitem{CerfKahn1974}
V.~Cerf and R.~Kahn.
\newblock A protocol for packet network intercommunication.
\newblock {\em Communications, IEEE Transactions on}, 22(5):637 -- 648, may
  1974.

\bibitem{davey2002}
B.A. Davey and H.A. Priestley.
\newblock {\em Introduction to lattices and order}.
\newblock Cambridge university press, 2002.

\bibitem{Dynamo}
Giuseppe DeCandia, Deniz Hastorun, Madan Jampani, Gunavardhan Kakulapati,
  Avinash Lakshman, Alex Pilchin, Swaminathan Sivasubramanian, Peter Vosshall,
  and Werner Vogels.
\newblock Dynamo: amazon's highly available key-value store.
\newblock In {\em Proceedings of twenty-first ACM SIGOPS symposium on Operating
  systems principles}, SOSP '07, pages 205--220, New York, NY, USA, 2007. ACM.

\bibitem{Gossip1987}
Alan Demers, Dan Greene, Carl Hauser, Wes Irish, John Larson, Scott Shenker,
  Howard Sturgis, Dan Swinehart, and Doug Terry.
\newblock Epidemic algorithms for replicated database maintenance.
\newblock In {\em Proceedings of the sixth annual ACM Symposium on Principles
  of distributed computing}, PODC '87, pages 1--12, New York, NY, USA, 1987.
  ACM.

\bibitem{FeketeLynchMansourSpinelli1993}
Alan Fekete, Nancy Lynch, Yishay Mansour, and John Spinelli.
\newblock The impossibility of implementing reliable communication in the face
  of crashes.
\newblock {\em J. ACM}, 40(5):1087--1107, November 1993.

\bibitem{GilbertLynch2002}
Seth Gilbert and Nancy Lynch.
\newblock Brewer's conjecture and the feasibility of consistent, available,
  partition-tolerant web services.
\newblock {\em SIGACT News}, 33(2):51--59, June 2002.

\bibitem{GoodmanVernonWoest1989}
James~R. Goodman, Mary~K. Vernon, and Philip~J. Woest.
\newblock Efficient synchronization primitives for large-scale cache-coherent
  multiprocessors.
\newblock In {\em Proceedings of the third international conference on
  Architectural support for programming languages and operating systems},
  ASPLOS-III, pages 64--75, New York, NY, USA, 1989. ACM.

\bibitem{CASSANDRA4775}
Arya Goudarzi.
\newblock Cassandra-4775: Counters 2.0.
\newblock \url{https://issues.apache.org/jira/browse/CASSANDRA-4775}, 2012.

\bibitem{GrayLamport2006}
Jim Gray and Leslie Lamport.
\newblock Consensus on transaction commit.
\newblock {\em ACM Trans. Database Syst.}, 31(1):133--160, March 2006.

\bibitem{DBLP:conf/cidr/Helland07}
Pat Helland.
\newblock Life beyond distributed transactions: an apostate's opinion.
\newblock In {\em CIDR}, pages 132--141. www.cidrdb.org, 2007.

\bibitem{HerlihyLimShavit1995}
Maurice Herlihy, Beng-Hong Lim, and Nir Shavit.
\newblock Scalable concurrent counting.
\newblock {\em ACM Trans. Comput. Syst.}, 13(4):343--364, November 1995.

\bibitem{HerlihyWing1990}
Maurice~P. Herlihy and Jeannette~M. Wing.
\newblock Linearizability: a correctness condition for concurrent objects.
\newblock {\em ACM Trans. Program. Lang. Syst.}, 12(3):463--492, July 1990.

\bibitem{Klophaus2010}
Rusty Klophaus.
\newblock Riak core: building distributed applications without shared state.
\newblock In {\em ACM SIGPLAN Commercial Users of Functional Programming}, CUFP
  '10, pages 14:1--14:1, New York, NY, USA, 2010. ACM.

\bibitem{Cassandra}
Avinash Lakshman and Prashant Malik.
\newblock Cassandra: a decentralized structured storage system.
\newblock {\em SIGOPS Oper. Syst. Rev.}, 44(2):35--40, April 2010.

\bibitem{CASSANDRA2495}
Sylvain Lebresne.
\newblock Cassandra-2495: Add a proper retry mechanism for counters in case of
  failed request.
\newblock \url{https://issues.apache.org/jira/browse/CASSANDRA-2495}, 2011.

\bibitem{LynchTuttle1987}
Nancy~A. Lynch and Mark~R. Tuttle.
\newblock Hierarchical correctness proofs for distributed algorithms.
\newblock In {\em Proceedings of the sixth annual ACM Symposium on Principles
  of distributed computing}, PODC '87, pages 137--151, New York, NY, USA, 1987.
  ACM.

\bibitem{ParkerPopek1983}
Jr. Parker, D.S., G.J. Popek, G.~Rudisin, A.~Stoughton, B.J. Walker, E.~Walton,
  J.M. Chow, D.~Edwards, S.~Kiser, and C.~Kline.
\newblock Detection of mutual inconsistency in distributed systems.
\newblock {\em Software Engineering, IEEE Transactions on}, SE-9(3):240 -- 247,
  may 1983.

\bibitem{rep:syn:sh138}
Marc Shapiro, Nuno Pregui{\c c}a, Carlos Baquero, and Marek Zawirski.
\newblock A comprehensive study of {C}onvergent and {C}ommutative {R}eplicated
  {D}ata {T}ypes.
\newblock Rapport de recherche 7506, Institut Nat.\ de la Recherche en
  Informatique et Automatique (INRIA), Rocquencourt, France, January 2011.

\bibitem{Shapiro2011CRDT}
Marc Shapiro, Nuno Pregui\c{c}a, Carlos Baquero, and Marek Zawirski.
\newblock Conflict-free replicated data types.
\newblock In {\em Proceedings of the 13th international conference on
  Stabilization, safety, and security of distributed systems}, SSS'11, pages
  386--400, Berlin, Heidelberg, 2011. Springer-Verlag.

\bibitem{ShavitZemach1996}
Nir Shavit and Asaph Zemach.
\newblock Diffracting trees.
\newblock {\em ACM Trans. Comput. Syst.}, 14(4):385--428, November 1996.

\bibitem{Stone1984}
Harold~S. Stone.
\newblock Database applications of the fetch-and-add instruction.
\newblock {\em IEEE Trans. Comput.}, 33(7):604--612, July 1984.

\bibitem{Terry:1994:SGW:645792.668302}
Douglas~B. Terry, Alan~J. Demers, Karin Petersen, Mike Spreitzer, Marvin
  Theimer, and Brent~W. Welch.
\newblock Session guarantees for weakly consistent replicated data.
\newblock In {\em Proceedings of the Third International Conference on Parallel
  and Distributed Information Systems}, PDIS '94, pages 140--149, Washington,
  DC, USA, 1994. IEEE Computer Society.

\bibitem{Vogels2009}
Werner Vogels.
\newblock Eventually consistent.
\newblock {\em Commun. ACM}, 52(1):40--44, January 2009.

\bibitem{WattenhoferWidmayer2004}
Rogert Wattenhofer and Peter Widmayer.
\newblock The counting pyramid: an adaptive distributed counting scheme.
\newblock {\em J. Parallel Distrib. Comput.}, 64(4):449--460, April 2004.

\bibitem{YewTzengLawrie1987}
Pen-Chung Yew, Nian-Feng Tzeng, and D.H. Lawrie.
\newblock Distributing hot-spot addressing in large-scale multiprocessors.
\newblock {\em IEEE Transactions on Computers}, 36(4):388--395, 1987.

\end{thebibliography}
\end{document}